\newtheorem{theorem}{Theorem}
\newtheorem{proposition}{Proposition}
\newtheorem{remark}{Remark}
\newtheorem{definition}{Definition}
\def\Tr{\mathrm{Tr}}
\title{\bf 
Phenomenon of a stronger trapping behavior in $\Lambda$-type quantum systems with symmetry}
\author{
Boris Volkov, Anastasia Myachkova and Alexander Pechen*\\[0.5em]
\small Department of Mathematical Methods for Quantum Technologies,\\
\small Steklov Mathematical Institute of Russian Academy of Sciences,\\
\small and University of Science and Technology MISIS
}
\date{\today}
\begin{document}
\maketitle
\begingroup
\renewcommand{\thefootnote}{}
\footnotetext{\texttt{*apechen@gmail.com}}
\endgroup

\begin{abstract} 
$\Lambda$, $V$, $\Xi$ (ladder), and other three-level quantum systems with one forbidden transition (referred here as $\Lambda$-type systems) play an important role in quantum physics. Various applications require manipulation of such systems using as control shaped laser field. In this work, we study how degeneracy of energy states or of Bohr frequencies in these systems affects the efficiency or difficulty of finding optimal shape of the control field. For this, we adopt the notion of higher order traps, which was introduced in [A.N. Pechen and D.J. Tannor, Are there traps in quantum control landscapes? Phys. Rev. Lett. {\bf 106}, 120402 (2011)], where second/third order traps were discovered for $\Lambda$-atom with one forbidden transition and with non-degenerate energy levels. We theoretically study control of such systems with and without degeneracy in their eigenstates and Bohr frequencies, and investigate numerically using GRAPE and l-BFGS algorithms how this degeneracy influences on the efficiency of optimizing the control laser field. We find that the degeneracy of the Bohr frequencies in the $\Xi$ system, which makes the system energy levels symmetrically distributed, leads to the appearance of a seventh order trap with a more significant attracting domain resulting in a more difficult optimization, while the degeneracy of energy states in generic $\Lambda$-type systems does not lead to an increase of the order of the zero control trap compared to the non-degenerate case. We also find that when not only the Bohr frequencies are degenerate in the system $\Xi$, but also the dipole moments for the two allowed transitions coincide (in this case $\Xi$ system is not controllable), then true traps arise in the quantum control landscape. In particular, the constant zero control becomes a trap.
\end{abstract}

\textbf{Keywords:} quantum control landscape, three-level quantum system, $\Lambda$-atom, $V$-atom, $\Xi$-atom

\section{Introduction}

Quantum control is a wide direction in quantum physics with various existing and prospective applications in quantum technologies~\cite{KochEPJ2022,Brif2012,TannorBook2007,Gough2012,Ansel_Dionis_Arrouas_Peaudecerf_Guerin_Guery-Odelin_Sugny_2024}. An important problem in quantum control is the analysis of quantum control landscapes, which allows to determine the level of difficulty of finding controls for optimal manipulation of quantum systems in numerical or laboratory experiments. 
Quantum control landscapes have found a variety of applications in physics and chemistry. They were exploited to manipulate the intensity of the Autler-Townes components in the photoelectron spectrum~\cite{Wollenhaupt_Prakelt_Sarpe-Tudoran_Liese_Baumert_2005}, making an experimental implementation for retinal photoisomerization in bacteriorhodopsin~\cite{Marquetand_Nuernberger_Vogt_Brixner_Engel_2007}, manipulating molecular systems~\cite{Ruetzel_Stolzenberger_Fechner_Dimler_Brixner_Tannor_2010}, inducing multi-photon excitations in atoms and vibrational population transfer in molecules~\cite{Palao_Reich_Koch_2013}, discovering the failure of greedy algorithms to generate fast quantum gates~\cite{Zahedinejad_Schirmer_Sanders_2014}, experimentally observing saddle points~\cite{Sun_Pelczer_Riviello_Wu_Rabitz_2015} and analyzing quantum state preparation and entanglement creation~\cite{Li_2023} in two spin quantum systems, dressing chopped-random-basis optimization~\cite{Rach_Muller_Calarco_Montangero_2015}, discovering a discontinous phase transition with broken symmetry in a two-qubit quantum system~\cite{Bukov_Day_Weinberg_Polkovnikov_Mehta_Sels_2018}, etc.

Theoretical analysis of quantum control landscapes takes its origin from the fundamental work~\cite{RHR}, where control landscapes for average values of quantum observables of $N$--level quantum systems were analyzed in the kinematic representation, when the controls are considered as matrix elements of the unitary evolution operator. Ultimately important for practical applications is the analysis of quantum control landscapes in the dynamical representation, when the control is the temporal shape of the laser field in the system Hamiltonian~\cite{HoRabitz2006}. Despite large efforts, a rigorous analysis of the control landscape in the dynamical representation, including proof of existence or absence of traps, has remained a challenge which is not yet fully solved even now. The absence of traps was rigorously proven only for two-level quantum systems ($N=2$)~\cite{PechenPRA2012,VolkovJPA2021,VolkovPechenIzv}. Traps were discovered for some systems with $N\geq 4$ and for not very large control times~\cite{deFouquieresSchirmer} (for larger times these traps disappear). Kinematic control landscape for the trace fidelity objective function defined on the $\operatorname{SU}(N)$ group for $N\ge 5$ was shown to have kinematic local extrema which are not global extrema~\cite{Birtea_Casu_Comanescu_2022}. However, similar kinematic traps on $\operatorname{SU}(N)$ were shown to not arise for a different choice of the quantum gate objective functional~\cite{deFouquieresSchirmer}. The landscape structure for various systems was studied using gradient-based optimized trajectories~\cite{Nanduri_Donovan_Ho_Rabitz_2013} and following exactly straight trajectories in the control space~\cite{Nanduri_Ho_Rabitz_2016}, it was exploited for uncertain or unknown quantum control systems~\cite{Wu_Ding_Dong_Wang_2019} with unsupervised machine learning~\cite{Niu_Boixo_Smelyanskiy_Neven_2019}, deep learning based on Alpha-Zero~\cite{Dalgaard_Motzoi_Sherson_2022}, etc. In numerical simulations, possible existence of multiple local minima was revealed for time-minimum control of two-level open quantum systems with unbounded controls entering only in the Hamiltonian~\cite{Clark_Bloch_Colombo_Rooney_2020} or in both Hamiltonian and dissipator~\cite{PetruhanovPechenPhotonics2023}.

In~\cite{PechenTannor2011,PechenTannor2012}, examples of {\it third order traps} were constructed for special $N$--level quantum systems with $N\ge 3$. In particular, a $\Lambda$-atom which is a three-level quantum system with forbidden direct transition between the ground and the intermediate states [Fig.~\ref{Fig1:structure}(a)] was analyzed. In~\cite{VolkovPechen}, examples of traps of arbitrarily high order were constructed for some degenerate quantum systems. Second order traps were found for a $\Lambda$-system~\cite{PechenTannor2011,PechenTannorReply,PechenTannor2012} based on the analysis of the Taylor expansion of the objective functional up to the third order. In~\cite{PechenTannor2011,PechenTannorReply,PechenTannor2012}, this system was analyzed only for the anharmonic case, i.e. when the transition (Bohr) frequencies between the two pairs of interacting states are different. The harmonic case, when these two transition frequencies are the same, was not considered.  
\begin{figure*}[t]
\includegraphics[width=\linewidth]{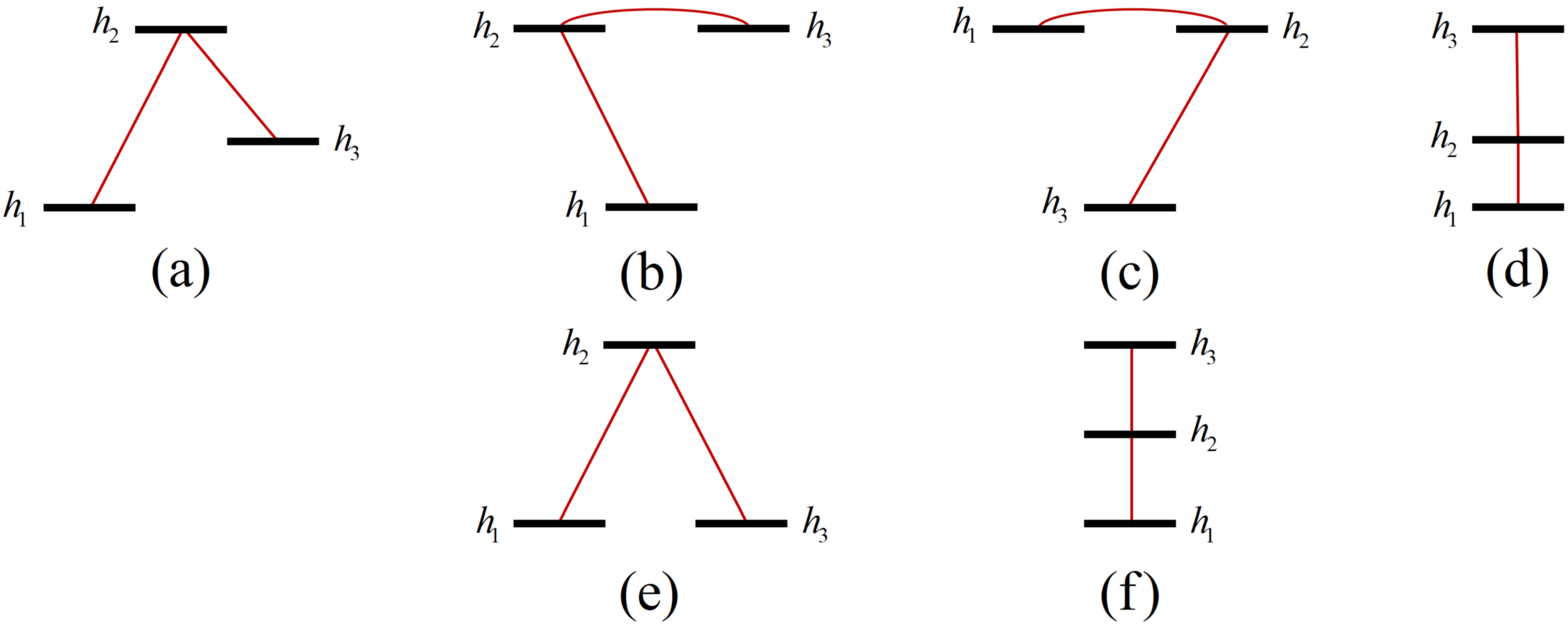}
\caption{Energy level structure of three-level $\Lambda$-type quantum systems with two allowed (and one forbidden) transitions with frequencies $\omega_1=h_2-h_1$ and $\omega_2=h_3-h_2$. Upper row shows the energy level structure for the case $\left | \omega_1 \right |\neq\left | \omega_2 \right |$: (a) $\Lambda$-atom; (b) and (c) degenerate system with two states of the same energy; (d) non-degenerate $\Xi$ (Ladder) system. Lower row shows the energy level structure for the case $|\omega_1|=|\omega_2|$: (e) degenerate~$\Lambda$-atom with two states of the same energy; (f) degenerate $\Xi$-system with two equal transition frequencies. Systems (a-e) are anharmonic systems, system (f) is a harmonic system. The $V$-system is not shown, since for our analysis it is equivalent to the $\Lambda$-atom (in general, systems with energy level structure reflected with respect to the horizontal line are equivalent for our analysis). The ladder system~(f) has symmetrically arranged energy states and when controllable (i.e., when the absolute values of the dipole moments for the two allowed transitions are different) it has trap of the seventh order. Other systems (a-e) have traps of the third order.}
\label{Fig1:structure}
\end{figure*}

As we show in the present work, this case with degenerate transition frequencies is essentially different from the non-degenerate case. For generality, we consider $\Lambda$ (Lambda), $V$ (Vee), $\Xi$ (Ladder), and other three-level quantum systems with one forbidden transition. Such $\Lambda$~\cite{RefLambda}, $V$~\cite{RefV}, and $\Xi$~\cite{RefLadder1,RefLadder2} systems play an important role in quantum physics, where they appear in modeling electromagnetic transparency~\cite{KocharovskayaKhanin1988,Harris1989}, describe spin-one like quantum systems, etc.~\cite{Scully_Zubairy_1997}, etc. Various control problems were studied for these systems, such as minimizing an objective functional which expresses a compromise between the energy of the control and the average population in the highest level~\cite{Alessandro_Isik_2024}, etc. 

We describe the quantum control landscapes for these systems controlled by a time-shaped laser field, and analyze how degeneracy of the Bohr frequencies affects the efficiency of finding the optimal shape of the control laser field. For this, we slightly modify the notion of higher order traps which was introduced in~\cite{PechenTannor2011,PechenTannor2012} and show that the degeneracy of the transition frequencies (such that the energy states are symmetrically arranged) leads to the appearance of a seventh order trap with a much more significant trapping behavior resulting in a more difficult optimization. Both harmonic and anharmonic cases are studied and expressions for subsets of critical controls are obtained. For the uncontrollable $\Xi$ system, we find true traps in the quantum control landscape. These findings allow to more deeply understand the influence of the symmetry and degeneracy of the system Hamiltonian or of the Bohr frequencies on the underlying quantum control landscapes. 

The structure of this work is the following. In Sec.~\ref{Sec:Hamiltonian}, the three-level systems with one forbidden transitions are described and the control problem is formulated. In Sec.~\ref{Sec:Controllability}, controllability of these systems is discussed. In Sec.~\ref{Sec:Landscape} our main result about presence of seventh order traps is formulated. Its proof is summarized in Appendices~\ref{A},~\ref{B}, and~\ref{C}. In Sec.~\ref{uncontrollable_Xi_system} existence of traps for the harmonic uncontrollable $\Xi$ system is formulated with details of the proof provided in Appendix~\ref{D}. In Sec.~\ref{Sec:Numerical} a comparison of the influence of the third and seventh order traps on the efficiency of the optimization is investigated using GRAPE approach~\cite{GRAPE, GRAPE_2}. Conclusions Sec.~\ref{Sec:Conclusions} summarizes the results.

\section{Hamiltonian of the three-level systems with one forbidden transitions}\label{Sec:Hamiltonian}

We consider a three-level quantum system whose dynamics is governed by the Schr\"odinger equation with time-dependent Hamiltonian:
\begin{equation}\label{Eq:Shred}
i\frac{dU_t^f}{dt}=(H_0+f(t)V)U_t^f,\quad U_{0}^f=\mathbb I.
\end{equation}
Here $H_0$ and $V$ are the free and interaction Hamiltonians ($3\times 3$ Hermitian matrices) and $f$ is a coherent control (shaped laser field). We consider the case when transition between one pair of energy states is forbidden. Then, with the appropriate
choice of basis the free and the interaction Hamiltonians can be written as $3\times 3$ matrices
\begin{equation}
\label{Lambda-atom}
H_0= 
\begin{pmatrix}
h_1 & 0 & 0\\
0 & h_2 & 0\\
0 & 0 & h_3
\end{pmatrix},\quad 
V=\begin{pmatrix}
0& v_{12}& 0\\
v^\ast_{12}& 0&v_{23}\\
0&\;\,\, v^\ast_{23}&\;\,\, 0
\end{pmatrix},
\end{equation}
where $h_i\in\mathbb R$ and $v_{12},v_{23}\in\mathbb C\setminus\{0\}$. For a technical convenience, we consider states $|1\rangle$ and $|3\rangle$ as the pair of states with the forbidden direct transition. We do not assume that $h_1, h_2, h_3$ arranged in ascending (or descending) order. 

Various cases of such systems are shown on Fig.~\ref{Fig1:structure}. Strictly speaking, such systems not always represent a $\Lambda$-atom ($\Lambda$-system), for which the direct transition between the two lowest energy states is forbidden and which is shown in Fig.~\ref{Fig1:structure}(a) and Fig.~\ref{Fig1:structure}(e). However, for the sake of language we will refer to all such systems as~$\Lambda$-type systems. The two allowed transitions have frequencies $\omega_1=h_2-h_1$ and $\omega_2=h_3-h_2$. The system is called harmonic if $\omega_1=\omega_2=:\omega$ [this is the case of the degenerate $\Xi$-system which is shown in Fig.~\ref{Fig1:structure}(f)], so that it has an equidistant spectrum similar to harmonic oscillator, and anharmonic if $\omega_1\neq \omega_2$ (see~\cite{SchirmerFuSolomon}). We consider $f\in L_2([0,T],\mathbb{R})$, since in this case the space of controls is a Hilbert space. Using such space of controls is convenient for Hessian analysis, and also since $L_2([0,T],\mathbb{R})\subset L_1([0,T],\mathbb{R})$, so that by Carath\'eodory's existence theorem~\cite{Filippov} the Schr\"odinger equation~(\ref{Eq:Shred}) for every control $f$ has a unique absolutely continuous solution. 

We consider a control problem of maximizing the Mayer type quantum control objective functional representing average value of a quantum Hermitian observable $O$ ($3\times 3$ matrix). This functional has the form
$$
J_O(f)=\Tr \left[U_T^f \rho_0 {(U_T^f)}^{\dagger}O\right],
$$
where $\rho_0$ is the initial system density matrix. To construct a quantum control landscape with trapping behaviour, we consider diagonal quantum observables $O={\rm diag}(\lambda_1,\lambda_2,\lambda_3)$ and assume $\lambda_1>\lambda_3>\lambda_2$. The initial system state generally can be either a pure state or a mixed density matrix $\rho_0$ (a positive operator with unit trace, $\rho_0\geq 0$, $\mathrm{Tr}\rho_0=1$). In this work, we consider $\rho_0=|3\rangle \langle 3|$.

\section{Controllability of the three-level systems with one forbidden transition}\label{Sec:Controllability}

An important step before the analysis of quantum control landscapes is to establish the degree of controllability of the controlled quantum system. Such degree determines upper and lower bounds in the control landscape, i.e. maximal and minimal attainable values of the objective functional. In this section, we briefly overview controllability of the considered systems. While there is a number of controllability notions~\cite{AlbertiniDDAlessandro}, we focus on the notion which is usually referred to as operator (or complete) controllability~\cite{Ramakrishna, SchirmerFuSolomon}. It means that a closed quantum system is controllable if there exist a time $T_{\mathrm{min}}>0$ such that for any time $T>T_{\mathrm{min}}$ and any unitary evolution $ U \in \operatorname{U}(N)$ there exists an admissible control $f\in L_2([0,T],\mathbb{R})$ which implements this evolution as a solution of the Schr\"odinger equation~\eqref{Eq:Shred}, up to a global (physically non-relevant) phase factor $e^{i \varphi}$ with $\varphi\in [0,2\pi)$, such that $U = e^{i \varphi} U_{T}^f$. Various research~\cite{TuriniciRabitz, PolackSuchowskiTannor, SchirmerFuSolomon} study the controllability for the systems considered in our work for some classes of matrix elements in the interaction Hamiltonian $V$. Below we remind the controllability classification, as summarized in~\cite{KuznetsovPechen_2023}, for arbitrary non-zero complex values of matrix elements of $V$:
\begin{itemize}
    \item Systems with $ |\omega_{1}| \neq |\omega_{2}| $ [shown in Fig.~\ref{Fig1:structure}(a-d)]  are \textbf{always} controllable;
    \item Systems with $ \omega_{1} = -\omega_{2} $ [shown in Fig.~\ref{Fig1:structure}(e)]  are \textbf{always not} controllable;
    \item Systems with $ \omega_{1} = \omega_{2}\neq 0$ [shown in Fig.~\ref{Fig1:structure}(f)] are controllable \textbf{iff} $ |v_{12}| \neq |v_{23}| $.
\end{itemize}

\section{Trapping features for the three-level systems with one forbidden transition}\label{Sec:Landscape}

In this section, we provide our main result stating that in difference to the anharmonic three-level quantum system with one forbidden transition which has a third order trap, the (degenerate) harmonic system has a seventh order trap.
\begin{definition}\label{DefTraps}
   A control $f\in L_2([0,T],\mathbb{R})$ is a trap if $f$ is a point of local maximum of the control objective functional $J_O$ but not a point of global maximum of $J_O$. 
\end{definition}
From practical point of view, traps are controls which slow down the search for the globally optimal controls by local search algorithms (e.g., by gradient ascent). 

Controls which could slow down operation of local search first order algorithms can be more general than traps in Definition~\ref{DefTraps}. As an example of such controls, higher order traps were defined in~\cite{PechenTannor2012}. Here we use a slightly modified definition~\cite{VolkovPechen}.
\begin{definition}
\label{maindefinition}
For a given objective functional $J_O$, {\it a trap of $n$-th order} is a control $f\in L_2([0,T],\mathbb{R})$ which satisfies the following two conditions:
\begin{enumerate}
\item $f$ is not a point of global maximum of $J_O$;
\item the Taylor expansion of the objective functional at the point $f$ has the form
\begin{eqnarray*}
J_O(f+\delta f)=J_O(f)+\sum\limits_{j=2}^{n} \frac 1{j!}J_O^{(j)}(f)(\delta f,\ldots,\delta f) +o(\|\delta f\|^{n}) \text{\;as $\|\delta f\|\rightarrow 0$,}
\end{eqnarray*}
where the functional 
$$R(\delta f):=\sum\limits_{j=2}^{n} \frac 1{j!}J_O^{(j)}(f)(\delta f,\ldots,\delta f)$$ is such that:
\begin{enumerate}
\item $R\neq 0$;
\item for any $\delta f\in L_2([0,T],\mathbb{R})$ there exists $\varepsilon>0$ such that
$R(t\delta f)\leq 0$ for all $t\in (-\varepsilon,\varepsilon)$.
\end{enumerate}
\end{enumerate}
\end{definition}

\begin{remark}
In the definition of the $n$-th order trap in~\cite{PechenTannor2012}, instead of the condition b), a stronger condition that $R(\delta f)\leq 0$ for all $\delta f \in L_2([0,T],\mathbb{R})$ was used.
\end{remark}

According to Definition~\ref{maindefinition}, traps of the $n$-th order are critical points of the objective functional. In particular, critical points of the objective functional at which the Hessian is semi-definite and is non-zero are traps of the second order.

We consider the trapping behavior in the vicinity of the zero (null) constant control $f_0\equiv0$. Recall that a control $f\in L_2([0,T],\mathbb{R})$ is called {\it regular} if the first Fr\'echet differential $U_T'(f)$ of the evolution mapping $U_T$ at $f$ is surjective; otherwise the control is called {\it singular}. It is known that only singular controls can be candidates for traps in the quantum control landscape for a controllable quantum system and that all constant controls are singular for $N$-level closed quantum systems with $N\geq3$~\cite{Wu_Long_Dominy_Ho_Rabitz_2012}. At the moment, all known examples of higher-order traps~\cite{deFouquieresSchirmer,PechenTannor2011,PechenTannor2012,VolkovPechen}, like all examples of true traps~\cite{deFouquieresSchirmer} in the dynamical landscapes for a controllable quantum system, are constant controls.

For a controllable $\Lambda$-type quantum system, if the final time $T$ is sufficiently large then minimal and maximal values of the objective functional $J_O$ are reachable and equal to $\lambda_2$ and $\lambda_1$ respectively, where $\lambda_2<J_O(f_0)=\lambda_3<\lambda_1$. In this case, the constant zero control $f_0$ is a singular control and a critical point of the objective functional (see Appendix~\ref{A}), but is not a global extremum point. Therefore, this control can potentially be a trap of some order. 
Below we prove that indeed $f_0$ is a trap and find its order. We directly show the existence of a direction in a vicinity of $f_0$ along which the objective grows. By presenting such a direction, we thereby prove that $f_0$ is not a global maximum point of the objective functional, without using the fact that the system is controllable.
We show that $f_0$ is a third order trap also for some uncontrollable quantum systems.

Our main analytical result is the following theorem.
\begin{theorem} For a sufficiently large time $T$,
\begin{itemize}
    \item If $(H_0,V)$ is an anharmonic system, i.e. $\omega_1\neq \omega_2$, then the zero constant control $f_0\equiv 0$ is a \textbf{trap of the third order} for the objective functional $J_O$. 
    \item If $(H_0,V)$ is a controllable harmonic system , i.e. $\omega_1=\omega_2$ and $|v_{12}|\neq |v_{23}|$, then the zero constant control $f_0\equiv 0$ is a \textbf{trap of the seventh order} for the objective functional $J_O$.      
  \end{itemize}
\end{theorem}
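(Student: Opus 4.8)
The plan is to compute the Taylor expansion of $J_O$ about $f_0\equiv0$ from the Dyson (time-ordered perturbation) series and then to analyze it on a nested family of kernels. First I would pass to the interaction picture, writing $U_T^f=e^{-iH_0T}W_T$ where $W_T$ solves $i\dot W_t=f(t)V_I(t)W_t$ with $V_I(t)=e^{iH_0t}Ve^{-iH_0t}$, whose only nonzero entries are $(V_I)_{12}=v_{12}e^{-i\omega_1t}$, $(V_I)_{23}=v_{23}e^{-i\omega_2t}$ and their conjugates. Since $\rho_0=\ket{3}\bra{3}$ and $O$ is diagonal, the phase $e^{-iH_0T}$ cancels in the moduli and $J_O(f)=\sum_{j=1}^3\lambda_j|a_j|^2$ with $a_j:=(W_T)_{j3}$. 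Using unitarity $\sum_j|a_j|^2=1$ to eliminate $a_3$, and writing $R(\delta f):=J_O(\delta f)-J_O(f_0)$, I obtain the clean form
\[
R(\delta f)=(\lambda_1-\lambda_3)|a_1|^2-(\lambda_3-\lambda_2)|a_2|^2,
\]
where both coefficients are strictly positive by the assumption $\lambda_1>\lambda_3>\lambda_2$.

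The structural fact driving everything is that $V_I$ couples only neighbouring levels on the chain $1$–$2$–$3$, so $a_2$ contains only odd orders in $\delta f$ (starting at order one) and $a_1$ only even orders (starting at order two); hence $R$ contains only even-order terms and every odd-order differential vanishes identically. I would then read off $a_2^{(1)}=-iv_{23}\,G(T)$ with $G(T)=\int_0^T\delta f(t)e^{-i\omega_2t}\,dt$, so the second-order term $R^{(2)}=-(\lambda_3-\lambda_2)|v_{23}|^2|G(T)|^2\le0$ is negative semidefinite and nonzero, with kernel $K_2=\{\delta f:G(T)=0\}$. The entire question of the trap order reduces to how far the nonpositivity of the truncated $R$ survives along rays inside $K_2$, i.e. to locating the first order at which the positive term $(\lambda_1-\lambda_3)|a_1|^2$ overtakes the negative one on the kernel.

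Here the harmonic–anharmonic dichotomy enters through a telescoping identity. On $K_2$ the order-four part is $R^{(4)}|_{K_2}=(\lambda_1-\lambda_3)|a_1^{(2)}|^2$, and evaluating the time-ordered double integral for $a_1^{(2)}$ gives, in the harmonic case $\omega_1=\omega_2=\omega$, the perfect square $a_1^{(2)}=-\tfrac12 v_{12}v_{23}\,G(T)^2$: the two hops $3\to2$ and $2\to1$ carry the same phase $e^{-i\omega t}$, so with $g(t)=\delta f(t)e^{-i\omega t}$ the integrand is the exact derivative $\tfrac12\frac{d}{dt}\big(\int_0^t g\big)^2$. Hence $a_1^{(2)}=0$ on $K_2$ and the order-four ascent term vanishes there. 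In the anharmonic case the two phases differ, the integral is not a function of $G(T)$ alone, and $a_1^{(2)}$ is generically nonzero on $K_2$, so $R^{(4)}$ is positive along some kernel ray. Thus for the anharmonic system $R_4$ fails the ray condition while $R_3=R_2\le0$ satisfies it, giving a trap of order exactly three.

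For the controllable harmonic system I would continue the nested analysis. Since $a_1^{(2)}=0$ on $K_2$, the positive contribution is postponed to order eight, whereas the negative term reappears already at order six: the triple integral yields $R^{(6)}|_{K_2}=-(\lambda_3-\lambda_2)|a_2^{(3)}|^2\le0$, defining the smaller kernel $K_3=\{G(T)=0,\ a_2^{(3)}=0\}$. On $K_3$ the leading surviving term is $R^{(8)}|_{K_3}=(\lambda_1-\lambda_3)|a_1^{(4)}|^2$, all lower $a_1,a_2$ contributions being annihilated by the two kernel constraints. The decisive and most laborious step is to evaluate the quartic time-ordered integral $a_1^{(4)}$, which receives contributions from the two length-four paths $3\to2\to1\to2\to1$ and $3\to2\to3\to2\to1$ carrying weights $|v_{12}|^2$ and $|v_{23}|^2$ respectively; reducing these on $K_3$ and showing that the surviving functional multiplies the factor $|v_{12}|^2-|v_{23}|^2$ is precisely where the controllability hypothesis $|v_{12}|\neq|v_{23}|$ is needed, as this coefficient would cancel in the resonant degenerate case. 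Granting the resulting non-vanishing, $R_8$ takes a strictly positive value along some ray in $K_3$ (which simultaneously exhibits the ascending direction promised in the text, certifying that $f_0$ is not a global maximum), while $R_7=R_6$ is nonpositive along every ray; hence the trap has order exactly seven. The main obstacle throughout is the bookkeeping of the high-order nested integrals and establishing the exact pattern of vanishing on $K_2$ and $K_3$ together with the single controllability-controlled non-vanishing at order eight; the sign-counting above then assembles the conclusion.
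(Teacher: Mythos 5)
Your proposal follows essentially the same route as the paper's Appendices~\ref{A}--\ref{C}: Dyson expansion at $f_0$, vanishing of all odd differentials due to the chain structure of $V$, the nested kernels $K_2$ and $K_3$ (the paper's $\mathfrak{H}^1$ and $\mathfrak{H}^3$), the perfect-square cancellation of the fourth-order term on $K_2$ in the harmonic case, and the ray-wise sign analysis that converts these facts into trap orders three and seven; your elimination of $a_3$ via unitarity is equivalent to the paper's normalization $\lambda_3=0$. That skeleton, and the identification of which terms survive on each kernel, are correct.

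The genuine gap is that at both places where a witness control must actually be exhibited, you only assert existence. For the anharmonic case, ``$a_1^{(2)}$ is generically nonzero on $K_2$'' is not a proof: one must produce $f\in K_2$ with $a_1^{(2)}\langle f\rangle\neq 0$, and this requires care, since (normalizing $\omega_2=1$) the natural candidate $f=\chi_{[0,2\pi]}$ gives $a_1^{(2)}\langle f\rangle=0$ whenever $\omega_1\in\mathbb{Z}\setminus\{-1,0\}$; this is why the paper's Theorem~\ref{thm1} runs a case analysis and switches to trigonometric controls at integer resonances. Far more serious is the harmonic case, where you explicitly ``grant'' the decisive fact. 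Your structural claim --- that on $K_3$ the eighth-order amplitude automatically factors as $(|v_{12}|^2-|v_{23}|^2)$ times a nonvanishing functional --- is not a consequence of the kernel constraints: writing $A^{4}_{13}$ as a common nonzero factor times $|v_{23}|^2K_4\langle f\rangle+|v_{12}|^2R_4\langle f\rangle$, membership in $\mathfrak{H}^3$ is a single cubic relation on the third-order forms and imposes no relation whatsoever between the fourth-order forms $K_4$ and $R_4$; the paper's closing Remark stresses that the identities $R_3=-2K_3$ and $R_4=-K_4$ hold for its special family of controls and may fail otherwise (and, consistently, the paper leaves the uncontrollable case $|v_{12}|=|v_{23}|$ open rather than concluding the cancellation you invoke). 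Closing this gap is precisely the labor of the paper's Theorem~\ref{thm2}: construct the explicit ansatz $f_4=(A+B\sin 2t+C\cos 3t)\chi_{[0,2\pi]}$, solve the homogeneous cubic system~(\ref{system}) so that $K_3\langle f_4\rangle=R_3\langle f_4\rangle=0$ (hence $f_4\in\mathfrak{H}^3$ for every admissible $V$), and verify by direct computation that $K_4\langle f_4\rangle=-R_4\langle f_4\rangle\neq0$, whence $A^4_{13}\langle f_4\rangle\propto\bigl(|v_{23}|^2-|v_{12}|^2\bigr)K_4\langle f_4\rangle\neq 0$ exactly when the system is controllable. Without this construction (or a substitute existence argument) you have not shown that $J_O^{(8)}(f_0)$ is positive anywhere on $K_3$, hence neither that the trap order is exactly seven nor even that $f_0$ fails to be a global maximum in the harmonic case; everything preceding that step is correct but is the easy part of the theorem.
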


The details of the proof of this theorem are given in Appendices~\ref{A},~\ref{B}, and~\ref{C}. All Fr\'{e}chet variations up to the eight order of the objective functional $J_O$ at the point $f_0\equiv 0$ are computed in Appendix~\ref{A}. A precise mathematical formulation and proofs of the assertion about third order traps for the anharmonic systems and seventh order traps for the harmonic systems are provided in Appendices~\ref{B} and~\ref{C}, respectively.

This theorem shows that symmetry due to degeneracy of the Bohr frequencies in $\Lambda$-type systems leads to a much more severe (mathematically) trapping feature than for such systems without this symmetry and degeneracy. However, how significantly traps of the seventh order slow down the practical optimization compared to traps of the third order, a priori is not evident. In Sec.~\ref{Sec:Numerical}, we numerically analyze and compare these cases.

\section{Control landscape analysis for the uncontrollable $\Xi$ system}
\label{uncontrollable_Xi_system}

The study of the remaining case of the three-level system with one forbidden direct transition, which is the uncontrollable harmonic system (uncontrollable $\Xi$-system), requires a separate analysis. In this section, we analyze for this system by purely analytical methods the problem of local extrema. As a result, we show that in this case traps in the sense of Definition~\ref{DefTraps} may arise. In particular, the zero constant control can be such a trap. 

For technical simplicity, we will consider not the most general case but the case of the same dipole moments for the two allowed transitions. So we assume that $\omega_1=\omega_2=\omega\neq 0$ and $v_{12}=v_{23}$. We also assume, without loss of generality, that $\mathrm{Tr}H_0=0$, i.e. $h_2=0$, $h_1=-\omega$, $h_3=\omega$. 
Then the free Hamiltonian $H_0$ and the interaction Hamiltonian $V$ generate a spin-1 representation of the Lie algebra $\mathfrak{su}(2)$, i.e. $\mathfrak{Lie}(iH_0,iV)\cong \mathfrak{su}(2)$ (for $A,B\in \mathfrak{u}(N)$ by $\mathfrak{Lie}(A,B)$ we denote the minimal Lie algebra containing $A$ and $B$). Let $J_x$, $J_y$ and $J_z$ be standard generators of spin-1
representation of the Lie algebra $\mathfrak{su}(2)$ with matrices:
$$
J_x=\frac 1{\sqrt{2}}
\begin{pmatrix}
0&1&0\\
1&0&1\\
0&1&0
\end{pmatrix},\;
J_y=\frac 1{\sqrt{2}}
\begin{pmatrix}
0&-i&0\\
i&0&-i\\
0&i&0
\end{pmatrix}, \;
J_z=
\begin{pmatrix}
1&0&0\\
0&0&0\\
0&0&-1
\end{pmatrix}.
$$
Then $H_0=-\omega J_z$
and $V=v_xJ_x+v_yJ_y$, where $v_x=\sqrt{2}\operatorname{Re}\,v_{12}$ and $v_y=-\sqrt{2}\operatorname{Im}\,v_{12}$.

Let $\mathcal R$ be a closed connected Lie subgroup of $\operatorname{SU}(3)$ such that its tangent space at the identity coincides with $\mathfrak{Lie}(iH_0,iV)$. Since the Lie group $\mathcal R$ is simple it is a reachable set from the identity for a sufficiently large time $T$ of the system $(H,V_0)$~\cite{JurdjevicSussmann,Jurdjevic,deFouquieresSchirmer}. In other words, there exists $T_0>0$ such that for any time $T\geq T_0$ the group $\mathcal R$ coincides with the set of all operators $U\in \operatorname{SU}(3)$ such that $U=U_T^f$ for some $f\in L_2([0,T],\mathbb{R})$.

Consider the function $F_O\colon \operatorname{SU}(3) \to \mathbb{R}$ defined as
$F_O(U)=\mathrm{Tr}(OU\rho_0U^\ast)$. We call this function the kinematic functional of the quantum problem~\cite{PechenTannor2011}. We have that $J_O(f)=F_O(U_T^f)$. It is known that this function does not have so-called kinematic traps, i.e. there are no points of local but not global extrema of the function $F_O$ on the Lie group $\operatorname{SU}(3)$~\cite{RHR}. Let $F_1=\left.F_O\right|_\mathcal{R}$ be the restriction of the kinematic functional $F_O$ on the Lie group $\mathcal{R}$. Below we show that the function $F_1$ has points of local but not global maxima on the Lie group $\mathcal{R}$.

Let $\mathcal{R}_Z$ be a Lie subgroup $\mathcal{R}_Z=\{U\in \mathcal{R}\colon U=e^{-i\phi J_Z},\phi\in [-\pi,\pi)\}\cong \operatorname{U}(1)$ of the Lie group $\mathcal R$. Since $H_0=-\omega J_z$, this set of operators coincides with those operators that can be obtained using only the free evolution.

\begin{proposition}
\label{MimMaxF1}
The global minimum and maximum of the function $F_1$ are
\begin{align*}
\max_{U\in \mathcal{R}}F_1(U)&=\lambda_1;\\
\min_{U\in \mathcal{R}}F_1(U)&=\frac{\lambda^2_2-\lambda_1\lambda_3}{2\lambda_2-\lambda_1-\lambda_3}<\lambda_3.
\end{align*}
An operator $U\in \mathcal{R}$ is a local but not global extremum point of the function $F_1$ if and only if $U\in\mathcal{R}_Z$. These are points of local maximum, the value at which is $\lambda_3$.
\end{proposition}

The following theorem follows from this proposition.

\begin{theorem}
\label{Thmuncontrollablecase}
Let $(H_0,V)$ be an uncontrollable harmonic $\Xi$ system with $v_{12}=v_{23}$ and $\mathrm{Tr}H_0=0$. For $T\geq T_0$ the control $f\in L_2([0,T],\mathbb{R})$ is a true trap for the functional $J_O$ if and only if 
$U_T^f\in \mathcal{R}_Z$. In particular, $f_0\equiv 0$ is a true trap.
\end{theorem}
Proofs of Proposition~\ref{MimMaxF1} and Theorem~\ref{Thmuncontrollablecase} are given in Appendix~\ref{D}.

\begin{remark}
Note that due to uncontrollability of the quantum system $(H_0,V)$ the minimum value of the objective functional is greater than for the case of a controllable quantum system:
$$
\min_{f\in L_2([0,T],\mathbb{R})} J_O(f)=\frac{\lambda^2_2-\lambda_1\lambda_3}{2\lambda_2-\lambda_1-\lambda_3}>\lambda_2,
$$
but the maximum values of the objective functional for the uncontrollable and the controllable systems are the same.
\end{remark}

\section{Influence of the higher order traps on the efficiency of the optimization}\label{Sec:Numerical}

In this section we investigate, using the GRAPE local search algorithm~\cite{GRAPE_2}, how significantly traps of the seventh order slow down the practical optimization compared to traps of the third order for the three-level $\Lambda$-type quantum systems with Hamiltonians of the form~(2). We consider two particular examples, system $S_1$ and system $S_2$, with the free and interaction Hamiltonians, respectively, of the form:
\begin{align*}
S_1:&\quad H_0= 
\begin{pmatrix}
0 & 0 & 0\\
0&  1 & 0\\
0& 0 & 2.5
\end{pmatrix},\quad 
V=\begin{pmatrix}
0 & 1 & 0\\
1 & 0 & 1.7\\
0 & 1.7 & 0
\end{pmatrix}\\
S_2:&\quad H_0= 
\begin{pmatrix}
0 & 0 & 0\\
0 & 1 & 0\\
0 & 0 & 2
\end{pmatrix},\quad 
V=\begin{pmatrix}
0 & 1 & 0\\
1 & 0 & 1.7\\
0 & 1.7 & 0
\end{pmatrix}
\end{align*}
The system $S_1$ is a nondegenerate $\Xi$-system~[Fig.~\ref{Fig1:structure}(d)], while the system $S_2$ is a controllable degenerate $\Xi$-system~[Fig.~\ref{Fig1:structure}(f)]. As a target quantum observable we consider
\begin{eqnarray*}
O=\sum_{j=1}^3{\lambda_j|j\rangle\langle j|} = |1\rangle\langle 1| - \lambda|2\rangle\langle 2|,
\end{eqnarray*} 
where $\lambda=5$, and as the initial density matrix we take $\rho = |3\rangle\langle 3|$. Then for sufficiently large final times $T$ at which the systems are controllable, $\max J_O=1$ and $\min J_O=-5$.

\begin{figure}[t]
\center{\includegraphics[width=.7\columnwidth]{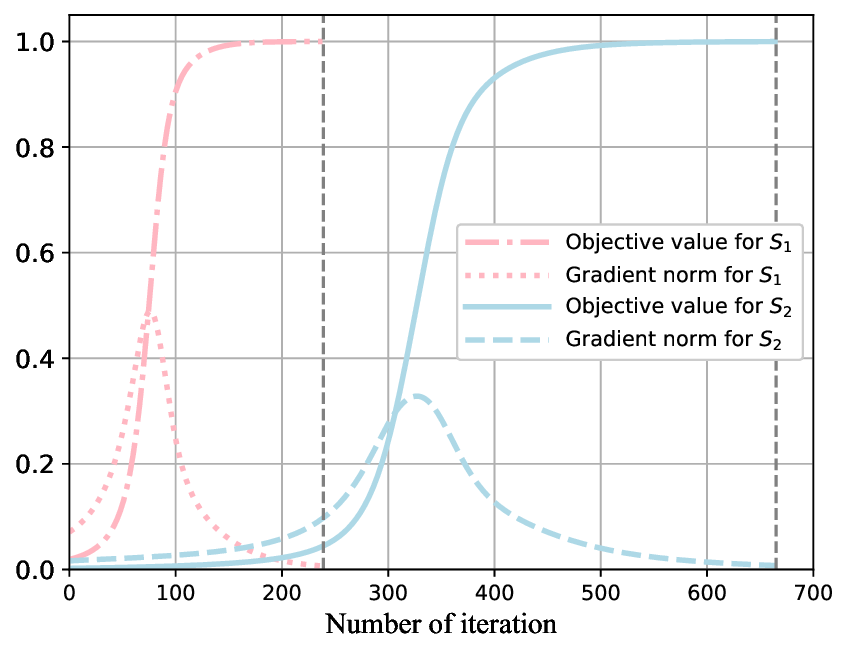}}
\caption{A typical dependence of the gradient norm and of the objective value on the number of iterations for the systems $S_1$ (pink lines) and $S_2$ (blue lines) for $\lambda = 0$ for GRAPE starting in a hypercube of size $l$ around zero control. The parameters are $T=10$, $D=200$, $I_{\mathrm{err}} = 10^{-4}$, $\varepsilon = 0.1$ and $l=2$ for system $S_1$, $l=4.5$ for system $S_2$. The dashed lines show the iteration at which the algorithm was completed.}
\label{fig:S_1_grad}
\end{figure}

\begin{figure}[t]
\centering\includegraphics[width=.7\linewidth]{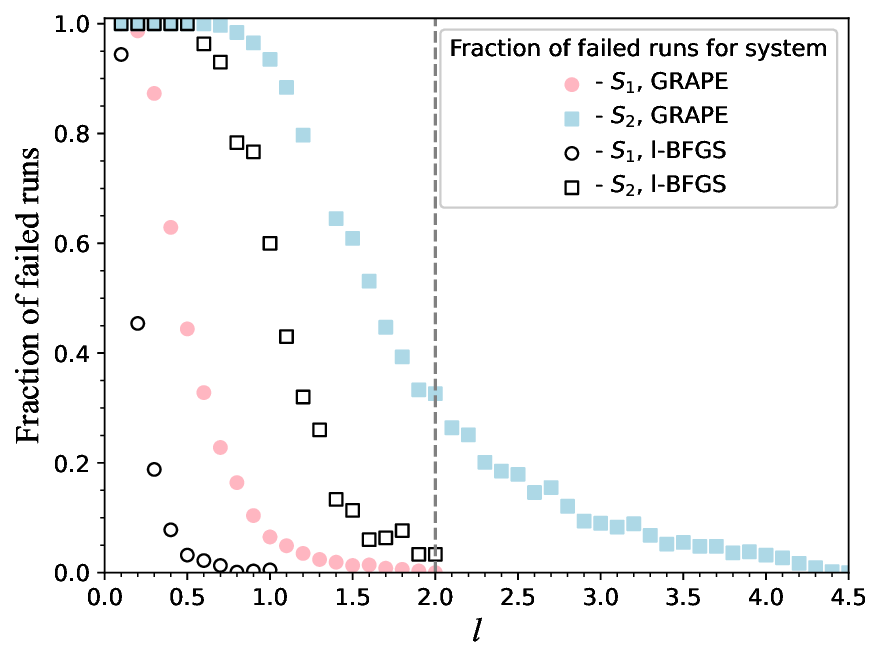}
\caption{The dependence of the fraction of failed runs of the GRAPE and l-BFGS algorithms on the parameter $l$ for the anharmonic system $S_1$ with the third order trap (pink and empty circles for GRAPE and l-BFGS, respectively) and for the harmonic system $S_2$ with the seventh order trap (blue and empty squares for GRAPE and l-BFGS, respectively). The parameters are $T=10$, $D=200$, and {$I_{\mathrm{err}}=10^{-4}$}.  The initial controls are generated randomly in the interval $[-l, l]$ and $L=10^3$ runs are performed to generate each point. For GRAPE, fixed step size is $\varepsilon=0.1$ and stopping criterion is defined by $K_{\mathrm{stop}}=2000$. For l-BFGS, the stopping criterion is defined by the gradient norm less or equal than $10^{-5}$ and $K_{\mathrm{stop}}=100$. }
\label{failed_runs}
\end{figure}

\begin{figure*}
\center{\includegraphics[width=\linewidth]{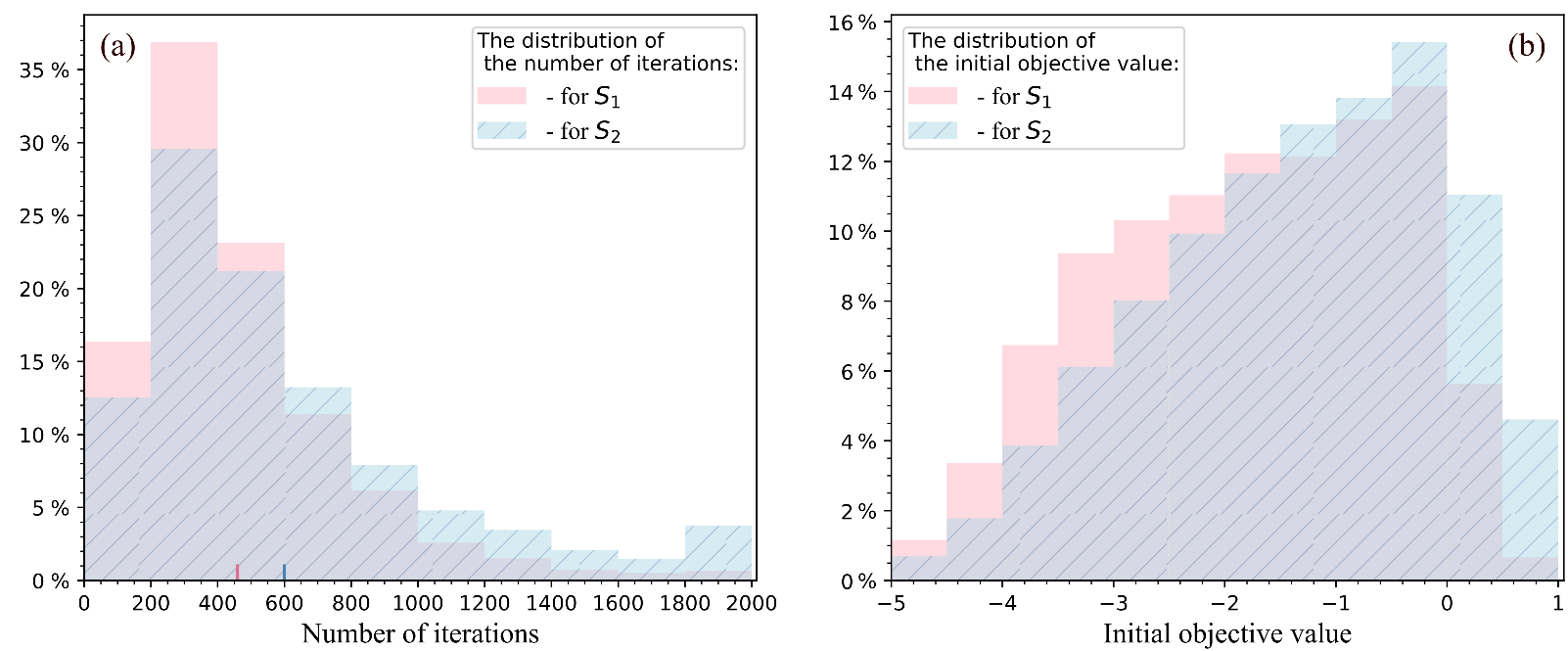}}
\caption{Histogram (a) represents the distribution of the number of iterations required to reach the objective value $J_{\mathrm{stop}}$ when $N_{\rm fail}=0$ starting from a large enough interval around zero control. Mean values $459$ and $599$ are shown by vertical ticks. Histogram (b) represents the distribution of the initial objective values. Totally 10 intervals are used for left histogram with width of interval 200 and 12 intervals used for right histogram with width of interval 0.5. The parameters are $T=10$, $D=200$, $I_{\mathrm{err}}=10^{-4}$, $\varepsilon=0.1$, $L=10^4$; $l=2$ for $S_1$ (pink plane) and $l=4.5$ for $S_2$ (blue shaded). The ordinate shows the percentage of the total number of runs $L$.}
\label{fig:3}
\end{figure*}

\begin{figure*}[!h]
\center{\includegraphics[width=\linewidth]{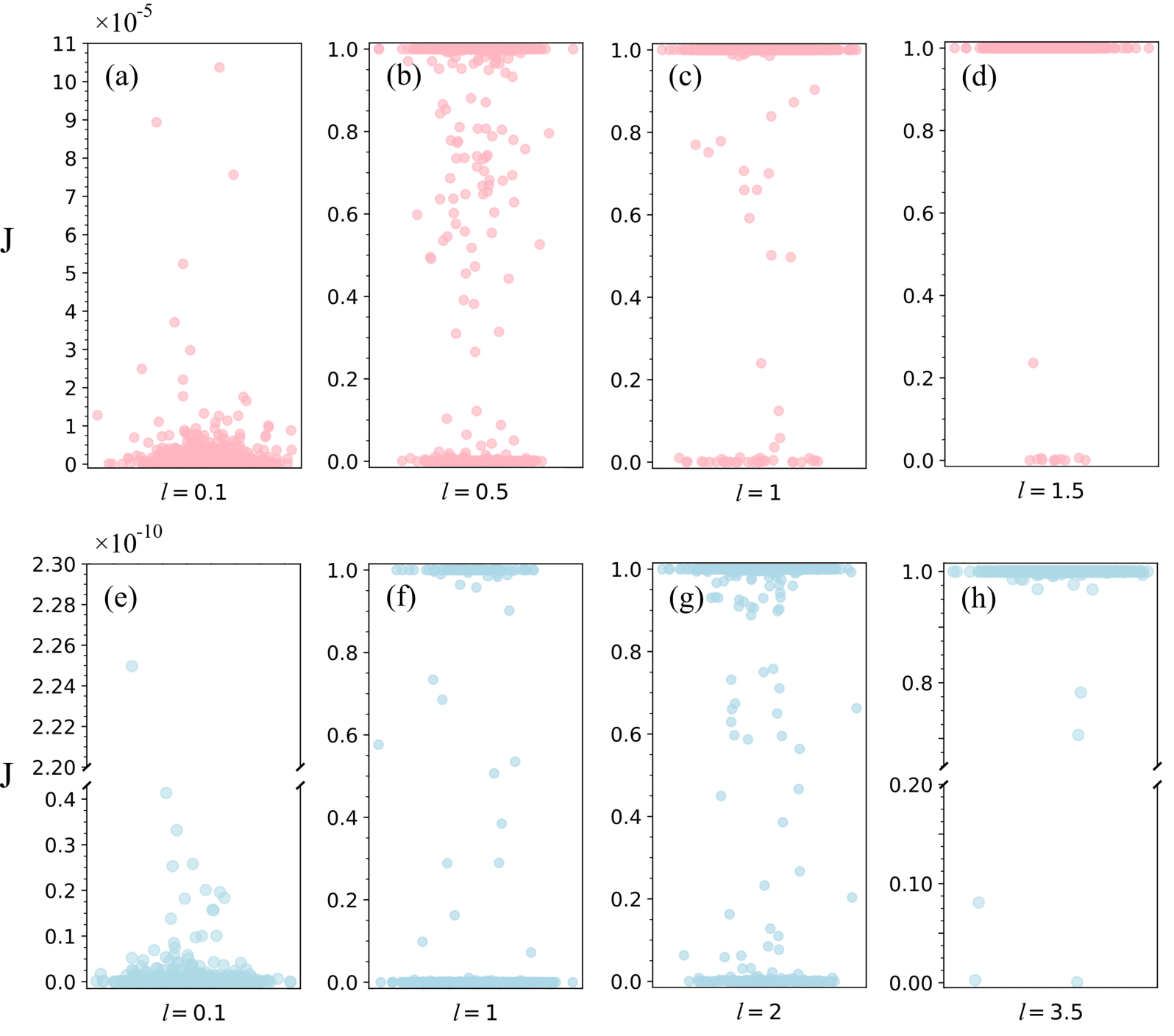}}
\caption{The values of the objective functional $\mathcal{J}_O$ at which the GRAPE algorithm stops when the initial controls are generated randomly in the hypercube $[-l, l]^D$ around zero control. The first row: for the anharmonic system $S_1$: (a) $l = 0.1$; (b) $l=0.5$; (c) $l = 1$; (d) $l=1.5$. The second row: for the harmonic system $S_2$ with parameters: (e) $l = 0.1$; (f) $l=1$; (g) $l = 2$; (h) $l=3.5$. The parameters are $T=10$, $D=200$, $I_{\mathrm{err}}=10^{-4}$, $\varepsilon=0.1$, $L=10^3$. The abscissa is used just for convenience to show all the points.}
\label{fig:scatters_for_S}
\end{figure*}
 
\begin{figure}[!h]
\centering\includegraphics[width=.7\linewidth]{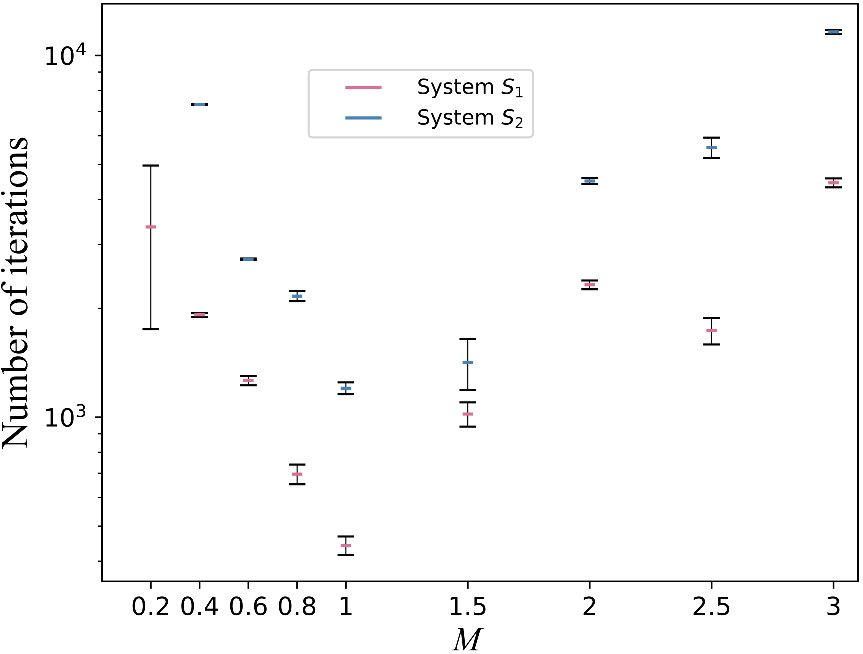}
\caption{The average number of iterations required for successful runs of the GRAPE algorithm with initial controls generated randomly in $[-M-0.1, -M+0.1]^D$ depending on the offset $M$ for the anharmonic system $S_1$ (pink dashes) and for the harmonic system $S_2$ (blue dashes). For each $M$, the average number of iterations for $S_2$ is larger than for $S_1$. Error bars show standard deviations. For $S_2$ and $M=0.2$, $K_{\mathrm{stop}}=14\times 10^3$ iterations were not enough and hence the corresponding point is not shown. The parameters are $T=10$, $D=200$, $I_{\mathrm{err}}=10^{-4}$, $\varepsilon=0.1$, $K_{\mathrm{stop}}=14\times10^{3}$. Each point is the average over runs of GRAPE starting at $L=10^3$ random initial controls. Vertical scale is logarithmic.} 
\label{fig:different_M}
\end{figure}

\begin{figure}[!h]
\center{\includegraphics[width=\linewidth]{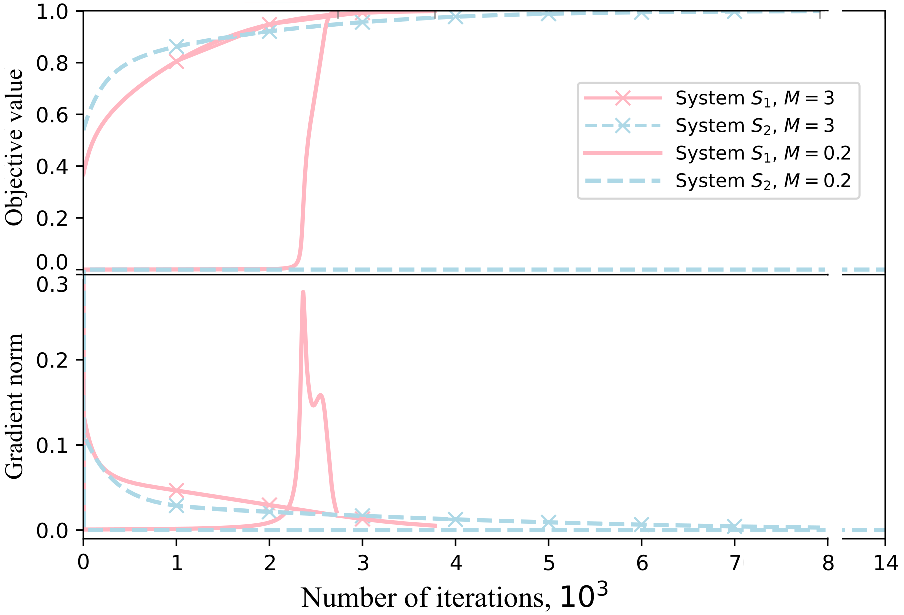}}
\caption{A typical dependence of the objective value (upper plot) and gradient norm (bottom plot) on the number of iterations for the system $S_1$ (pink solid lines) and for the system $S_2$ (blue dashed lines) for $l=0.1$ and $M=0.2, 3$. The parameters are $T=10$, $D=200$, $I_{\mathrm{err}} = 10^{-4}$, $\varepsilon = 0.1$. The strokes at the top of the figure show the iteration numbers at which the algorithm stopped.}
\label{Fig7}
\end{figure}

The GRAPE algorithm for these systems has been implemented manually using the NumPy library from Python. For each system, we consider piecewise constant controls of the form 
\begin{equation*}
f_C(t)=\sum_{k=1}^D c_k\chi_{[t_k,t_{k+1}]}(t),
\end{equation*}
where $C = (c_1,\dots,c_D)$ is a $D$-dimensional vector, $\chi_{[t_k,t_{k+1}]}(t)$ is the characteristic function of the interval $[t_k,t_{k+1}]$, $t_k = \Delta t(k-1)$, and $\Delta t = T/D$. The control goal is to maximize the objective function $\mathcal{J}_O\colon \mathbb{R}^D\to \mathbb{R}$ determining the average value of the observable $O$:
\begin{equation*}
\mathcal{J}_O(C) = J_O(f_C)=\Tr[U_T^{f_C} \rho_0(U_{T}^{f_C})^{\dagger}O].
\end{equation*}
Gradient of the objective function has the form:
\begin{eqnarray*}
\nabla \mathcal{J}_O(C)= \left(\frac{\partial \mathcal{J}_O(C)}{\partial c_1},\ldots,\frac{\partial \mathcal{J}_O(C)}{\partial c_D}\right).
\end{eqnarray*}
Here the partial derivative with respect to $c_k$ in the linear in $\Delta t$ approximation is
\begin{equation}\label{gradient}
\frac{\partial \mathcal{J}_O(C)}{\partial c_k} \approx 2\Delta t \times \operatorname{Im}(\Tr[W_k^{\dagger}VW_k\rho_0W_D^{\dagger}OW_D]),
\end{equation}
where $W_k = U_kU_{k-1}\dots U_2U_1$, $U_n = e^{-i(H_0+c_nV)\Delta t}$, and $U^{f_C}_T=W_D$.

Let us describe the operation of the GRAPE algorithm. The algorithm starts with constructing an initial control $C_1$, whose components are generated randomly with a uniform distribution in the interval $[-l, l]$ with some $l>0$. On $i$-th iteration, starting from control $C_i$ we compute gradient $\nabla \mathcal {J}_O(C_i)$ and update the control as $C_{i+1}=C_{i}+\varepsilon\cdot\nabla \mathcal{J}_O(C_i)$, where $\varepsilon>0$ is a fixed small number. The loop continues either until $\mathcal{J}_O$ reaches the fidelity value $J_{\mathrm{stop}}=1-I_{\mathrm{err}}$, where $I_{\mathrm{err}}$ is some threshold describing admissible deviation from the global maximum objective, or until a predefined maximum admissible number of iterations $K_{\mathrm{stop}}$ is reached. For simulations we use final time $T=10$, number of control vector components $D=200$, random initial control vectors in the hypercube $[-l,l]^D$, maximum admissible number of iterations $K_{\mathrm{stop}} = 2000$, step size $\varepsilon = 0.1$, and admissible deviation from the global maximum objective $I_{\mathrm{err}} = 10^{-4}$. For this $I_{\mathrm{err}}$ we have $J_{\mathrm{stop}}=0.9999$. If starting from some initial control the GRAPE algorithm is not able to reach the value $J_{\mathrm{stop}}$ with not more than $K_{\mathrm{stop}}$ iterations, we call this run a failed run.

Fig.~\ref{fig:S_1_grad} shows the dependence of the gradient norm and of the objective value on the number of iterations for the systems $S_1$ and $S_2$ when $\lambda=0$. In this case, for both systems it is easy to obtain high values of the fidelity. The dependence of the gradient norm and objective value on the number of iterations in this case is similar for different runs and is also similar to that obtained earlier for the $\Lambda$-atom~(Fig.~2 in~\cite{PechenTannor2012}) for $\lambda=0$. 

To estimate how the order of the trap affects the efficiency of the GRAPE algorithm, now we consider $\lambda=-5$, when zero control becomes a third order trap for the system $S_1$ and a seventh order trap for the system $S_2$. For this, for each $l=0.1,0.2\dots,2$ for the system $S_1$ and $l=0.1,0.2,\dots,4.5$ for the system $S_2$ we compute the number of failed runs $N_\mathrm{fail}$ (among runs for $L=10^3$ initial controls randomly generated in the hypercube $[-l, l]^D$) at which the GRAPE algorithm has not been able to obtain the required accuracy of the objective functional value. Fig.~\ref{failed_runs} shows the dependence of the fraction of failed runs $N_\mathrm{fail}/L$ on $l$ for the anharmonic system $S_1$ with the third order trap (pink circles) and for the harmonic system $S_2$ with the seventh order trap (blue squares).

We find that the order of the trap significantly affects the efficiency of the GRAPE algorithm. For the harmonic system $S_2$ having the seventh order trap, in difference to the anharmonic system $S_1$ having the third order trap, it was necessary to significantly increase the size of the interval $l$ in which the components of the initial control vector are generated, in order to always achieve the desired maximal fidelity value $J_{\rm stop}$ of the objective functional. The required fidelity of the objective functional for the anharmonic system $S_1$ (pink circles) is obtained for all $L=10^3$ runs (i.e., when the number of failed runs $N_{\mathrm{fail}}=0$) for $l=2$ (for $l=1$ the fraction of failed runs is as small as about 5\%), while for obtaining the same for the anharmonic system $S_2$ (blue squares) the parameter $l$ should be increased to $l=4.5$ (and 5\% fraction of failed runs is at about $l=3.5$). Similarly, we find that the mean number of iterations required to reach the desired objective value starts to decrease for the harmonic system $S_1$ from $l=0.3$, while for the anharmonic system $S_2$ it starts to decrease only from $l=0.8$ and with a slower speed.

For the numerical analysis of the control landscapes we use a first-order GRAPE algorithm with a fixed step size to investigate and compare different landscapes. In practice, GRAPE versions based on higher-order quasi-Newton methods, such as Broyden–Fletcher–Goldfarb–Shanno (BFGS) algorithm, including its limited-memory version l-BFGS, are often used in quantum control~\cite{EitanPRA2011} (as well as in other optimization problems) and are known to be more efficient. For a comparison, we investigate and show on Fig.~\ref{failed_runs} the influence of the 3rd and the 7th order traps on the optimization by l-BFGS. We use Python implementation of l-BFGS realized in the {\tt minimize} function of the {\tt SciPy} library. The objective function and the gradient with approximate derivatives~(\ref{gradient}) are passed as the parameters to the {\tt minimize} function. A run is considered as successful if the final obtained value of the objective function is more than ${1-I_{\mathrm{err}}}$. For l-BFGS, the stopping criterion is defined by the gradient norm less or equal than $10^{-5}$ or by the number of iterations more than $K_{\rm stop}=100$ (this significantly smaller maximal number of iterations is used since generally l-BFGS achieves the objective in much less iterations than the first-order fixed step size GRAPE). In overall, as expected we find that l-BFGS works much better than fixed step size GRAPE and in all successful runs it ends in less than 100 iterations. Regarding the traps we find, as shown on Fig.~\ref{failed_runs}, that while the number of steps necessary to obtain the desired value of the fidelity by l-BFGS decreases roughly by the order compared to GRAPE, their trapping behavior remains similar, i.e. to optimize in a vicinity of the 7th order trap by l-BFGS
is significantly more difficult than to optimize in the vicinity of the 3rd order trap. In the rest of the work, we use the fixed step size GRAPE for further investigation of the control landscapes.

The distribution of the number of iterations required to achieve objective value $J_{\mathrm{stop}}$ and the distribution of the initial objective values for the anharmonic system $S_1$ with $l=2$ and for the harmonic system $S_2$ with $l=4.5$, i.e. when the number of failed runs becomes zero, are shown on Fig.~\ref{fig:3} (10 bins with width of 200 and 12 bins with width 0.5 were taken for (a) and (b) subfigures, respectively). On average $459$ and $599$ iterations (shown by vertical ticks) are required to get $J_{\mathrm{stop}}$ population transfer using GRAPE for $S_1$ and $S_2$, respectively. 
Here $L=10^4$ random initial controls were used to generate the plots for each system. The left histogram shows that the distributions of the number of iterations for both systems are very similar when initial controls are generated randomly in the hypercubes with the corresponding sizes $l$ for which the algorithm always reaches the given fidelity. Moreover, these distributions are relatively narrow and concentrated around mean values. So, obtaining numbers of iterations significantly larger than the mean value are rare. The distribution of the initial objective values, in opposite, is quite flat and has no narrow peaks. Even low initial objective values are possible with not too small probability. From both diagrams it can be seen that even though the algorithm runs from various, including small, objective values, starting from a sufficient distance from the trap allows to always reach the fidelity value $J_{\rm stop}$ in the admissible number of iterations.

Fig.~\ref{fig:scatters_for_S} shows the values of the objective function $\mathcal{J}_O$ at which the GRAPE algorithm stops for the systems $S_1$ and $S_2$. Here for each $l$ totally $L=10^3$ initial controls generated randomly in the interval $[-l, l]$ were used to produce the final objective values using the GRAPE algorithm. The top row shows the results for the anharmonic system $S_1$ with $l=0.1,0.5,1,1.5$, the bottom row shows the results for the harmonic system $S_2$ with $l=0.1,1,2,3.5$. We can see that even when choosing a substantially large interval for the initial control vector, there are cases in which we get stuck in the neighborhood of the trap and cannot approach the global maximum.

Next we compare the obtained results for the same systems $S_1$ and $S_2$ but for initial control vectors $C$ whose components are generated randomly in the interval $[-l - M, l - M]$ with constant offset $M$. First, for each system we make $L=10^3$ runs of the algorithm without limiting the number of iterations $K_{\mathrm{stop}}$ ($K_{\mathrm{stop}}=\infty$) for $l=0.1$ and $M=0$. It was found that to reach the objective value $J_{\mathrm{stop}}$ for an anharmonic system could require about $1.2\times10^{5}$ iterations, and for a harmonic system more than $5\times10^{5}$ iterations.
Next for calculations, we take $M=0.2,0.4,0.6,0.8,1,1.5,2,2.5,3$ and a large value $K_{\mathrm{stop}}=14\times10^3$. Fig.~\ref{fig:different_M} shows the average number of iterations required to reach the objective value $J_{\mathrm{stop}}$ when $l=0.1$ for the systems $S_1$ (pink dashes) and $S_2$ (blue dashes) with $M$ offset together with obtained standard deviations. For $S_2$ and $M=0.2$, $K_{\mathrm{stop}}=14\times 10^3$ iterations were not enough (hence the corresponding point is not shown). For each $M$ the average number of iterations for $S_2$ is larger than for $S_1$. The computed numbers of iterations significantly decrease with $M$ up to $M=1$. However, for $M>1$ they again start to increase. To reveal the reason for this increase, we computed the number of iterations also for $M=3.5, 4, 4.5, 5, 5.5$ (not shown) and did not find any maximum at some $M$ that could indicate another trap within this range. To see the difference between $M=0.2$ and $M=3$ cases, when the numbers of iterations are similar, on Fig.~\ref{Fig7} we show a typical behavior of the objective (upper plot) and gradient norm (bottom plot) vs number of iterations for both systems for these values of $M$. When $M=0.2$, for the system $S_1$ gradient norm is very small at the beginning (close to the trap) but is sufficient to move away of the trap, and after about $2\times 10^3$ iterations gradient norm sharply increases forcing the objective in some hundreds of iterations to achieve a close to $1$ value. For the system $S_2$, gradient norm is very small during all the iterations so the objective almost does not increase (objective does not exceed $3.9\times 10^{-9}$) during all the $14\times 10^3$ iterations. For $M=3$, for both systems gradient remains small and decreasing so a large number of iterations is required to reach the target objective value even in the absence of trap. Thus as we can see, the addition of a not very large offset significantly affects the capabilities of avoiding the trap and allows to significantly reduce the running time of the algorithm and make the optimization more efficient.

\section{Conclusions}\label{Sec:Conclusions}
In~\cite{PechenTannor2011}, the notion of higher order traps for quantum control landscapes was introduced and $\Lambda$-atom with non-degenerate energy states was shown to be the simplest quantum system having a second (and actually a third) order trap. The case of degenerate transition Bohr frequencies was not treated, while the conventional $\Lambda$-atom is an important example of such system. In this work, we have considered systems with degenerate transition frequencies in details. The considered general class of systems also includes $V$-atom and spin-one $\Xi$ ladder-type three-level quantum systems. We rigorously prove that symmetry due to degeneracy of the transition frequencies of a three-level quantum system with one forbidden transition results in the zero constant control being a seventh order trap, while for degeneracy of the energy states this zero constant control remains a third order trap. Thus, zero constant control is a third order trap for a three-level system without degeneracy of the transition frequencies. Hence we find that degeneracy of the transition frequencies leads to a sharp increase in the order of the trap from third to seventh. Our numerical analysis performed for some example cases using GRAPE approach shows that seventh order traps significantly slow down the practical optimization --- the attraction domain of a seventh order trap is much larger than that of a third order trap. In addition to the fixed step size GRAPE, we study the landscapes using l-BFGS and obtain that while l-BFGS is  significantly better than the fixed step size GRAPE, it has a similar trapping behavior in the vicinities of the 3rd and 7th order traps. Important to note that higher order traps are not necessarily traps. At least, no true traps among higher order traps were found up to now, and for some examples, like $\Lambda$-atom, the third order trap was shown to be not a true trap~\cite{PechenTannor2012}. We rigorously prove that found in this work seventh order trap for the degenerate $\Xi$-system is not a true trap either. Thus local search algorithms can escape from it, while at a much slower speed than from a regular point. In addition to the case of a controllable $\Xi$-system, i.e. when the dipole moment matrix elements satisfy~$|v_{12}|\ne |v_{23}|$, we also analytically investigate local extrema for the uncontrollable case which has been studied before in various context~\cite{CookShore,Hioe,TuriniciRabitz,SugnyKontz,Shuang_et_al,ElovenkovaQuantumReport2023} beyond the control landscape analysis. For this uncontrollable system, we find that traps do exist. The trap-free nature of quantum control landscapes was assumed for a fully controllable case~\cite{RHR}, and this finding does not contradict this assumption.

\appendix
\section{Taylor expansion of the objective functional up to the eighth order}
\label{A}

First of all, note that without loss of generality we can assume that $\lambda_1>0$, $\lambda_2<0$, $\lambda_3=0$. Indeed, consider the observable $O'=O-\lambda_3\mathbb I$. Because the functionals $J_O$ and $J_{O'}$ differ by a constant [$J_O(f)=J_{O'}(f)+\lambda_3$], they have equivalent control landscapes~\cite{PechenTannor2011}.

Denote $V_t:=e^{itH_0}Ve^{-itH_0}$. Let $f_0\equiv 0$. The Taylor (Dyson's) series for the evolution operator $U_T^{f_0+\delta f}=U_T^{\delta f}$ for any $\delta f\in L_2([0,T],\mathbb{R})$ has the following form
\begin{eqnarray*}
U_T^{f_0+\delta f} &=&e^{-iTH_0}\Bigl(\mathbb I+\sum_{n=1}^\infty (-i)^n\int_0^Tdt_1\int_0^{t_1}dt_2 \ldots\int_0^{t_{n-1}}dt_n\delta f(t_1)\ldots \delta f(t_n)V_{t_1}\ldots V_{t_n}\Bigr).
\end{eqnarray*}
Substituting this expression in $J_O$, one can obtain the Taylor series for the objective functional $J_O$, $J_O(f_0+\delta f)=\sum\limits_{n=0}^\infty \frac 1{n!} J^{(n)}_O(f_0)(\delta f,\ldots,\delta f)$. Let $A^n_{lk}\colon L_2([0,T],\mathbb{R})\to\mathbb{C}$ be the form of the order $n$ defined as
\begin{eqnarray*}
A^n_{lk}\langle f\rangle & := & \int_0^Tdt_1\int_0^{t_1}dt_2\ldots \int_0^{t_{n-1}}dt_nf(t_1)\ldots f(t_n)\langle l|V_{t_1}\ldots V_{t_n}|k\rangle.
\end{eqnarray*}
By direct calculations, one can obtain the following formula for the Fr\'echet differential of the order $n$ of the objective functional $J_O$ at~$f_0\equiv 0$:
\begin{eqnarray}
\label{Jm1}
\frac 1{n!}J^{(n)}_O(f_0)(f,\ldots,f)
 &=& i^{n} \lambda_1\sum_{j=1}^{n-1} (-1)^{j} A^j_{13}\langle f \rangle \overline{A^{n-j}_{13}\langle f \rangle}  \nonumber \\ &&\quad + i^{n} \lambda_2  \sum_{j=1}^{n-1} (-1)^{j} A^j_{23}\langle f \rangle \overline{A^{n-j}_{23}\langle f \rangle}.
\end{eqnarray}
For the $\Lambda$-type system~(\ref{Lambda-atom}), we have $\langle 2|V_{t_n}\ldots V_{t_1}|3\rangle=0$ for even $n$ and
$\langle 1|V_{t_n}\ldots V_{t_1}|3\rangle=0$ for odd $n$. Hence, if $n$ is even, then $A^{n}_{23}=0$, and if $n$ is odd, then $A^{n}_{13}=0$. Then formula~(\ref{Jm1}) implies that $J^{(n)}_O(f_0)(f,\ldots,f)=0$ if $n$ is odd. Thus $f_0$ is a critical point of the objective functional $J_O$.

Even variations of the objective functional $J_O$ from the second to the eighth order have the form
\begin{eqnarray}
\label{J2}
\frac 1{2!}J_O^{(2)}(f_0)(f,f)&=&\lambda_2 |A^{1}_{23}\langle f \rangle|^2,\nonumber \\
\label{J4}
\frac 1{4!}J_O^{(4)}(f_0)(f,\ldots, f)&=& \lambda_1|A^{2}_{13}\langle f\rangle|^2
-\lambda_2\left(A^1_{23}\langle f\rangle\overline{A^{3}_{23}\langle f \rangle}+A^3_{23}\langle f\rangle\overline{A^{1}_{23}\langle f\rangle}\right),\\
\label{J6}
\frac 1{6!}J_O^{(6)}(f_0)(f,\ldots, f)&=& \lambda_2\left(|A^{3}_{23}\langle f\rangle|^2+A^1_{23}\langle f\rangle\overline{A^{5}_{23}\langle f \rangle}+A^5_{23}\langle f\rangle\overline{A^{1}_{23}\langle f\rangle}\right)-\nonumber\\&&-\lambda_1\left(A^2_{13}\langle f\rangle\overline{A^{4}_{13}\langle f \rangle}+A^4_{13}\langle f\rangle\overline{A^{2}_{13}\langle f\rangle}\right),\\
\label{J8}
 \frac 1{8!}J_O^{(8)}(f_0)(f,\ldots, f)  &=&\lambda_1\left(|A^{4}_{13}\langle f\rangle|^2+A^2_{13}\langle f\rangle\overline{A^{6}_{13}\langle f \rangle}+A^6_{13}\langle f\rangle\overline{A^{2}_{13}\langle f\rangle}\right)-\nonumber\\&&-\lambda_2(A^1_{23}\langle f\rangle\overline{A^{7}_{23}\langle f \rangle}+A^7_{23}\langle f\rangle\overline{A^{1}_{23}\langle f\rangle}+A^3_{23}\langle f\rangle\overline{A^{5}_{23}\langle f \rangle}+\nonumber\\&&+A^5_{23}\langle f\rangle\overline{A^{3}_{23}\langle f\rangle}).
\end{eqnarray}

\section{Anharmonic system. Trap of the third order}
\label{B}

Let $\mathfrak{H}^0=L_2([0,T],\mathbb{R})$.
Define the linear space of controls $\mathfrak{H}^1\subset\mathfrak{H}^0$ as
\begin{equation*}
\mathfrak{H}^{1}=\{f\in \mathfrak{H}^0\colon A^{1}_{23}\langle f \rangle=0\}.
\end{equation*}
Since $A^{1}_{23}\langle f \rangle=v_{23}\int_0^Te^{-i\omega_2t}f(t)dt$, the space $\mathfrak{H}^1$ consists of controls $f\in \mathfrak{H}^0$ which satisfy
$\int_0^Te^{-i\omega_2t}f(t)dt=0$.

We have the following result, originally obtained in~\cite{PechenTannor2011,PechenTannor2012} for the case of the $\Lambda$-atom (with $0<-\omega_2<\omega_1$ in our notations).
\begin{theorem}
\label{thm1}
The Taylor expansion for the objective functional $J_O$ at the constant zero control $f_0\equiv 0$ has the form
\begin{multline*}
J_O(f_0+\delta f)=J_O(f_0)+\frac 12J_O^{(2)}(f_0)(\delta f,\delta f)\\+\frac 12J_O^{(4)}(f_0)(\delta f,\ldots,\delta f)+o(\|\delta f\|^{5})\text{\;as $\|\delta f\|\rightarrow 0$,}
\end{multline*}
where for all $\delta f\in \mathfrak{H}^0\setminus \mathfrak{H}^1$ one has $J_O^{(2)}(f_0)(\delta f,\delta f)<0$ and for all $\delta f\in \mathfrak{H}^1$ one has $J_O^{(2)}(f_0)(\delta f,\delta f)=0$ and
$J_O^{(4)}(f_0)(\delta f,\ldots,\delta f)
\geq0$. Let $(H_0,V)$ be an anharmonic system, i.e. $\omega_1\neq \omega_2$. Then
\begin{enumerate}
\item if $\omega_2=0$ then for any $T> 0$ there exists control $\delta f\in \mathfrak{H}^1$ such that $J_O^{(4)}(f_0)(\delta f,\ldots,\delta f)>0$;
\item if $|\omega_2|>0$ then for any $T\geq T_0:=\frac{2\pi}{|\omega_2|}$ there exists control $\delta f\in \mathfrak{H}^1$ such that $J_O^{(4)}(f_0)(\delta f,\ldots,\delta f)>0$.
\end{enumerate}
Thus, $f_0$ is a trap of the third order for the objective functional $J_O$.
\end{theorem}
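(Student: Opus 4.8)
My plan is to run the entire local analysis off the closed formulas \eqref{J2}--\eqref{J4} for the even variations at $f_0$ established in Appendix~\ref{A}, combined with the sign pattern $\lambda_1>0>\lambda_2$ coming from the normalization $\lambda_3=0$ together with $\lambda_1>\lambda_3>\lambda_2$. First I would record that Appendix~\ref{A} already shows that all odd Fréchet differentials vanish at $f_0$; since in particular $J_O^{(3)}=J_O^{(5)}=0$, the Taylor series collapses to the stated form with remainder $o(\|\delta f\|^{5})$, so only \eqref{J2} and \eqref{J4} are needed for the local picture.

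Next I would settle the sign statements, which are immediate. Formula \eqref{J2} gives $\tfrac12 J_O^{(2)}(f_0)(\delta f,\delta f)=\lambda_2|A^1_{23}\langle\delta f\rangle|^2$; since $\lambda_2<0$ this is strictly negative exactly when $A^1_{23}\langle\delta f\rangle\neq0$, i.e. for $\delta f\in\mathfrak{H}^0\setminus\mathfrak{H}^1$, and it vanishes on $\mathfrak{H}^1$. On $\mathfrak{H}^1$ one has $A^1_{23}\langle\delta f\rangle=0$, so in \eqref{J4} the entire $\lambda_2$ cross-term drops out and $\tfrac1{4!}J_O^{(4)}(f_0)(\delta f,\dots,\delta f)=\lambda_1|A^2_{13}\langle\delta f\rangle|^2\ge0$ because $\lambda_1>0$. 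These verify the displayed inequalities, and they already supply the trap conditions of Definition~\ref{maindefinition} with $n=3$: the Taylor polynomial $R=\tfrac12 J_O^{(2)}(f_0)$ is nonzero (off $\mathfrak{H}^1$) and satisfies $R(t\delta f)=t^2\lambda_2|A^1_{23}\langle\delta f\rangle|^2\le0$ on every ray, so (a) and (b) hold.

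The real content is condition~1 of Definition~\ref{maindefinition}, that $f_0$ is not a global maximum. Instead of invoking controllability, I would exhibit a direction $\delta f\in\mathfrak{H}^1$ with $A^2_{13}\langle\delta f\rangle\neq0$; by the previous paragraph this forces $J_O(f_0+t\delta f)=J_O(f_0)+\tfrac{t^4}{4!}J_O^{(4)}(f_0)(\delta f,\dots,\delta f)+o(t^4)>J_O(f_0)$ for small $t\neq0$. The enabling computation is $\langle 1|V_{t_1}V_{t_2}|3\rangle=v_{12}v_{23}\,e^{-i\omega_1 t_1}e^{-i\omega_2 t_2}$, so that with $P(t):=\int_0^t e^{-i\omega_2 s}f(s)\,ds$ (note $P(0)=0$ and $P(T)=A^1_{23}\langle f\rangle/v_{23}=0$ on $\mathfrak{H}^1$) one has $A^2_{13}\langle f\rangle=v_{12}v_{23}\int_0^T e^{-i\omega_1 t}f(t)P(t)\,dt$; substituting $f=e^{i\omega_2 t}P'$ and integrating by parts yields $A^2_{13}\langle f\rangle=\tfrac{i(\omega_1-\omega_2)}{2}\,v_{12}v_{23}\int_0^T e^{-i(\omega_1-\omega_2)t}P(t)^2\,dt$, the boundary terms vanishing since $P(0)=P(T)=0$. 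This makes anharmonicity transparent: the prefactor $\omega_1-\omega_2$ is nonzero precisely in the anharmonic case, whereas for $\omega_1=\omega_2$ the fourth variation collapses on $\mathfrak{H}^1$, which is exactly why the harmonic case must be pushed to higher order.

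It then remains to produce a real $f$ with $P(T)=0$ for which $\int_0^T e^{-i(\omega_1-\omega_2)t}P^2\,dt\neq0$, and this is the main obstacle. When $\omega_2=0$ the constraint is $P(T)=\int_0^T f=0$ with $P$ real, and any narrow real bump concentrating $P^2$ where $e^{-i\omega_1 t}$ stays near a fixed phase makes the integral nonzero for every $T>0$, matching the ``any $T\ge0$'' claim. When $\omega_2\neq0$ the constraint $P(T)=0$ is genuinely complex (two real conditions), and one needs enough time to satisfy it nontrivially while keeping the weighted integral nonzero; this is where the threshold $T\ge 2\pi/|\omega_2|$ enters, since over one full period of $\omega_2$ one can already realize an admissible $f$ (for instance a constant or low-order sinusoidal profile supported on $[0,2\pi/|\omega_2|]\subset[0,T]$) with $P(T)=0$. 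The delicate point I expect to be hardest is guaranteeing nonvanishing of the weighted integral uniformly over all anharmonic pairs $(\omega_1,\omega_2)$ and all $T\ge 2\pi/|\omega_2|$; rather than bet on one explicit profile, which could accidentally annihilate the integral for special frequency ratios, I would argue that $f\mapsto\int_0^T e^{-i(\omega_1-\omega_2)t}P_f^2\,dt$ is a nonzero quadratic functional on $\mathfrak{H}^1$, so by polarization its associated bilinear form cannot vanish identically and must be nonzero on some admissible real $f$, which completes the proof that $f_0$ is a trap of the third order.
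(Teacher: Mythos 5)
Your opening moves are correct and match the paper: the sign pattern $\lambda_1>0>\lambda_2$ after normalizing $\lambda_3=0$, the vanishing of all odd variations, the collapse of~(\ref{J4}) on $\mathfrak{H}^1$ to $\lambda_1|A^2_{13}\langle\delta f\rangle|^2$, and the verification of conditions (a)--(b) of Definition~\ref{maindefinition} via $R(t\delta f)=t^2\lambda_2|A^1_{23}\langle\delta f\rangle|^2\le 0$. Your integration-by-parts identity on $\mathfrak{H}^1$,
\begin{equation*}
A^{2}_{13}\langle f\rangle=\frac{i(\omega_1-\omega_2)}{2}\,v_{12}v_{23}\int_0^T e^{-i(\omega_1-\omega_2)t}P(t)^2\,dt,
\qquad P(t)=\int_0^t e^{-i\omega_2 s}f(s)\,ds,
\end{equation*}
is also correct and is genuinely different from the paper's route: it isolates the anharmonicity factor $\omega_1-\omega_2$ and explains in one line why the fourth variation dies in the harmonic case, something the paper only sees a posteriori. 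Your bump construction for $\omega_2=0$ is sound as well (and, as a side remark, is cleaner than the paper's own treatment of that subcase, whose witness $\chi_{[0,T]}$ does not in fact satisfy the constraint $\int_0^T f\,dt=0$ defining $\mathfrak{H}^1$ when $\omega_2=0$).

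The genuine gap is in the case $\omega_2\neq 0$, which is exactly where the paper does its real work. Your fallback argument is circular: ``$f\mapsto\int_0^T e^{-i(\omega_1-\omega_2)t}P_f^2\,dt$ is a nonzero quadratic functional on $\mathfrak{H}^1$'' \emph{is} the statement to be proved, and polarization cannot rescue it --- polarization only says that this quadratic form vanishes identically on the real subspace $\mathfrak{H}^1$ if and only if its symmetric bilinear form does; it converts one unproved non-vanishing claim into another and produces no witness. Moreover, the explicit profiles you float as candidates do fail at exactly the resonances you yourself flag: normalizing $\omega_2=1$, $T_0=2\pi$ and taking $f=\chi_{[0,2\pi]}$, one has $P(t)=i(e^{-it}-1)$, hence
\begin{equation*}
\int_0^{2\pi}e^{-i(\omega_1-1)t}P(t)^2\,dt
=-\int_0^{2\pi}\bigl(e^{-i(\omega_1+1)t}-2e^{-i\omega_1 t}+e^{-i(\omega_1-1)t}\bigr)\,dt=0
\quad\text{for all }\omega_1\in\mathbb{Z},\ |\omega_1|\ge 2,
\end{equation*}
so the integer-ratio case is left with no proof at all. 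This is precisely why the paper, after handling $\omega_1\notin\mathbb{Z}$, $\omega_1=0$ and $\omega_1=-1$ with the constant one-period pulse, must introduce the resonant control $f_3(t)=(A\cos nt+B\sin nt)\chi_{[0,2\pi]}(t)$ for $\omega_1=n$, $|n|\ge2$, and check $A^2_{13}\langle f_3\rangle=v_{12}v_{23}(iA+B)(A-iBn)\pi/(n^2-1)\neq0$. To close your gap you must either do that case analysis, or push your own identity further: localize $f$ on a short interval $[a,a+\delta]$ subject to the two linear constraints $\int e^{-i\omega_2 s}f(s)\,ds=0$; a rescaling $t=a+\delta\tau$ shows the weighted integral equals $\delta^3 e^{-i(\omega_1+\omega_2)a}\bigl(\int_0^1 G_0(\tau)^2\,d\tau+O(\delta)\bigr)$ with $G_0$ a nonzero real antiderivative, hence is nonzero for small $\delta$. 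That second repair would in fact prove the conclusion for every $T>0$, not just $T\ge 2\pi/|\omega_2|$, but as written your proposal does not establish the resonant case at all.
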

\begin{proof}
Formula~(\ref{J2}) implies that for all $f\in \mathfrak{H}^0\setminus\mathfrak{H}^1$ (for shortness of notation, in the proof we use $f$ instead of $\delta f$) we have $J^{(2)}_O(f_0)(f, f)<0$
and for all $f\in \mathfrak{H}^1$ we have
\begin{equation*}
J_O^{(2)}(f_0)(f,f)=0.
\end{equation*}
Since $A^{1}_{23}\langle f \rangle=0$ for all $f\in \mathfrak{H}^1$, formula~(\ref{J4}) implies that for $f\in \mathfrak{H}^1$ holds
\begin{equation*}
\frac 1{4!}J_O^{(4)}(f_0)(f,\ldots, f)=\lambda_1|A^{2}_{13}\langle f\rangle|^2\geq 0,
\end{equation*}
where 
\begin{equation*}
A^{2}_{13}\langle f\rangle
=v_{12}v_{23}\int_0^Tdt_1\int_0^{t_1}dt_2e^{-i\omega_1t_1-i\omega_2t_2} f(t_1)f(t_2).
\end{equation*}

Now we show that if $\omega_1\neq\omega_2$ then the set $\mathfrak{H}^{1}$ contains a nonzero control $f\ne0$ such that the value of the Fr\'{e}chet differential of the 4-th order on this control is positive, $J_O^{(4)}(f_0)(f,\ldots, f)>0$. For this purpose we show that there exists a nonzero control $f\in\mathfrak{H}^{1}$ such that $A^{2}_{13}\langle f\rangle\neq 0$.

Let $\omega_2=0$ (and $\omega_1\ne0$) [Fig.~\ref{Fig1:structure}(b)]. Consider the control $f_1(t)=\chi_{[0,T]}(t)$, where $\chi_{[0,T]}$ is the characteristic function of the segment $[0,T]$.
Then
\begin{eqnarray}
A^{2}_{13}\langle f_1\rangle
=v_{12}v_{23}\int_0^{T}dt_1\int_0^{t_1}dt_2e^{-i\omega_1t_1}=v_{12}v_{23}\frac{-1+e^{-i\omega_1T}(1+i\omega_1T)}{\omega_1^2}\neq 0.\nonumber
\end{eqnarray}

Let $\omega_2>0$ (the case of negative $\omega_2$ is treated similarly) and $T>T_0=2\pi/\omega_2$. Without loss of generality one can consider $\omega_2=1$ and $T_0=2\pi$.
The general case can be reduced to this case by the time rescaling $t\mapsto \omega_2t$. 
If $\omega_1\notin \mathbb{Z}$ or $\omega_1\in\{-1,0\}$, consider the control $f_2(t)=\chi_{[0,2\pi]}(t)$ (the case $\omega_1=1$ corresponds to the harmonic system and is discussed in detail in  Appendix~\ref{C}). Then
\begin{align*}
&A^{2}_{13}\langle f_2\rangle
=v_{12}v_{23}\int_0^{2\pi}dt_1\int_0^{t_1}dt_2e^{-i\omega_1t_1-it_2}\\
&=\begin{cases}
-\dfrac{v_{12}v_{23}}{{\omega_1(\omega_1+1)}}(2\sin^2{\pi \omega_1}+i\sin{2\pi \omega_1})\neq 0,\quad \omega_1\notin \mathbb{Z}\\
-2v_{12}v_{23}\pi i\neq 0,\quad \omega_1=0 \quad \hphantom{-}[\textrm{Fig.}~\ref{Fig1:structure}(c)] \\
\hphantom{-}2v_{12}v_{23}\pi i \neq 0,\quad \omega_1=-1 \quad [\textrm{Fig.}~\ref{Fig1:structure}(e)]
\end{cases}
\end{align*}
If $\omega_1=n$, where $n\in \mathbb{Z}\setminus\{-1,0,1\}$, the form $A^{2}_{13}$ vanishes on the control $f_2$. In this case, we consider the nonconstant control $f_3(t)=(A\cos{nt}+B\sin{nt})\chi_{[0,2\pi]}(t)$, where $A^2+B^2\neq 0$.
Due to $n\in \mathbb{Z}\setminus\{-1,0,1\}$, we have $f_3\in \mathfrak{H}^{1}$. Then
\begin{eqnarray*}
A^{2}_{13}\langle f_3\rangle = v_{12}v_{23} \int_0^{2\pi}dt_1\int_0^{t_1}dt_2e^{-in t_1}e^{-it_2} (A\cos{nt_1}+B\sin{nt_1})(A\cos{nt_2}+B\sin{nt_2})\\
\quad=v_{12}v_{23}\frac{(iA+B)(A-iBn)\pi}{n^2-1}\neq 0.
\end{eqnarray*}

Thus for the anharmonic case there exists a nonzero control $f\in\mathfrak{H}^{1}$ such that $A^{2}_{13}\langle f\rangle\neq 0$ and, hence, $J_O^{(4)}(f_0)(f,\ldots,f)=\lambda_1|A^{2}_{13}\langle f\rangle|^2>0$.
\end{proof}

\begin{remark}
Note that if $\omega_1=-\omega_2$ then the quantum system is an uncontrollable $\Lambda$-atom~[Fig.~\ref{Fig1:structure}(f)] and $f_0\equiv 0$ is a third order trap.
\end{remark}

\section{Harmonic controllable system. Trap of the seventh order}
\label{C}

Consider the set $\mathfrak{H}^{3}$ of controls from $\mathfrak{H}^{1}$ on which the cubic form $A^{3}_{23}$ vanishes:
\begin{equation*}
\mathfrak{H}^{3}=\{f\in\mathfrak{H}^{1}\colon A^{3}_{23}\langle f\rangle=0\}.
\end{equation*}

\begin{theorem}
\label{thm2}
Let $(H_0,V)$ be a harmonic $\Lambda$-type system, i.e. such that $\omega_1=\omega_2=:\omega\neq 0$. Then the Taylor expansion of the objective functional $J_O$ at the constant zero control $f_0\equiv 0$ has the form
\begin{eqnarray*}
J_O(f_0&+&\delta f)=J_O(f_0)+\frac 12J_O^{(2)}(f_0)(\delta f,\delta f)\\&+&\frac 1{4!}J_O^{(4)}(f_0)(\delta f,\ldots,\delta f)\nonumber
+\frac 1{6!}J_O^{(6)}(f_0)(\delta f,\ldots,\delta f)\\&+&\frac 1{8!}J_O^{(8)}(f_0)(\delta f,\ldots,\delta f)+o(\|\delta f\|^{9}) \textrm{ as } \|\delta f\|\rightarrow 0\,
\end{eqnarray*}
where 
\begin{enumerate}
\item for any $\delta f\in \mathfrak{H}^0\setminus \mathfrak{H}^1$ one has $J_O^{(2)}(f_0)(\delta f,\delta f)<0$,
\item for any $\delta f\in \mathfrak{H}^1$ one has $J_O^{(2)}(f_0)(\delta f,\delta f)=0$ and
$J_O^{(4)}(f_0)(\delta f,\ldots,\delta f)=0$,
\item for any $\delta f\in \mathfrak{H}^1\setminus \mathfrak{H}^3$ one has $J_O^{(6)}(f_0)(\delta f,\ldots,\delta f)<0$,
\item for any $\delta f\in \mathfrak{H}^3$ one has $J_O^{(6)}(f_0)(\delta f,\ldots,\delta f)=0$ and
$J_O^{(8)}(f_0)(\delta f,\ldots,\delta f)
\geq0$.
\end{enumerate}
Moreover, if $|v_{12}|\neq|v_{23}|$ then for $T\geq T_0=2\pi/|\omega|$ there exists a control $\delta f\in \mathfrak{H}^3$ such that $J_O^{(8)}(f_0)(\delta f,\ldots,\delta f)>0$.
Thus, $f_0$ is a trap of the seventh order for the objective functional $J_O$. 
\end{theorem}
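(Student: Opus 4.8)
The plan is to separate the argument into two parts. First, items~1--4, which track how the surviving even differentials $J_O^{(2)},J_O^{(4)},J_O^{(6)},J_O^{(8)}$ behave on the flag $\mathfrak{H}^0\supset\mathfrak{H}^1\supset\mathfrak{H}^3$; I expect these to drop out almost mechanically from~(\ref{J2})--(\ref{J8}) once one harmonic identity is established. Second, the existence of a direction $\delta f\in\mathfrak{H}^3$ with $J_O^{(8)}(f_0)(\delta f,\dots,\delta f)>0$, which is where the controllability hypothesis $|v_{12}|\neq|v_{23}|$ must be used; this is the genuine obstacle.

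The harmonic identity underlying the first part is that $A^2_{13}$ vanishes on all of $\mathfrak{H}^1$. When $\omega_1=\omega_2=\omega$ the integrand of $A^2_{13}\langle f\rangle=v_{12}v_{23}\int_0^T\!dt_1\!\int_0^{t_1}\!dt_2\,e^{-i\omega(t_1+t_2)}f(t_1)f(t_2)$ is symmetric in $(t_1,t_2)$, so the ordered integral is half the square of the unordered one and $A^2_{13}\langle f\rangle=\tfrac{v_{12}}{2v_{23}}\bigl(A^1_{23}\langle f\rangle\bigr)^2$, which is $0$ exactly on $\mathfrak{H}^1$. Feeding $A^1_{23}=A^2_{13}=0$ into~(\ref{J4}) gives $J_O^{(4)}(f_0)=0$ on $\mathfrak{H}^1$; formula~(\ref{J6}) then collapses to $\tfrac1{6!}J_O^{(6)}(f_0)(f,\dots,f)=\lambda_2|A^3_{23}\langle f\rangle|^2$, which is $<0$ on $\mathfrak{H}^1\setminus\mathfrak{H}^3$ and $0$ on $\mathfrak{H}^3$ since $\lambda_2<0$; and on $\mathfrak{H}^3$, where also $A^3_{23}=0$, formula~(\ref{J8}) collapses to $\tfrac1{8!}J_O^{(8)}(f_0)(f,\dots,f)=\lambda_1|A^4_{13}\langle f\rangle|^2\ge0$. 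With $\tfrac1{2!}J_O^{(2)}(f_0)(f,f)=\lambda_2|A^1_{23}\langle f\rangle|^2<0$ off $\mathfrak{H}^1$ from~(\ref{J2}), this yields items~1--4.

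For the second part I would first rescale time to set $\omega=1$ (so that $T\ge2\pi$) and work with $f$ supported on $[0,2\pi]\subseteq[0,T]$, which changes no $A^n_{lk}$. Summing over the two admissible tridiagonal paths $3\to2\to1\to2\to1$ and $3\to2\to3\to2\to1$ gives $A^4_{13}=v_{12}v_{23}\bigl(|v_{12}|^2I_1+|v_{23}|^2I_2\bigr)$, and likewise $A^3_{23}=v_{23}\bigl(|v_{12}|^2J_1+|v_{23}|^2J_2\bigr)$, where $I_1,I_2,J_1,J_2$ are time-ordered integrals with definite sign patterns $e^{-i(\pm t_1\pm\cdots)}$. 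The existence claim then amounts to showing that the form $f\mapsto A^4_{13}\langle f\rangle$ does not vanish identically on $\mathfrak{H}^3$. I would reduce the ordered integrals using the primitives $p(t)=\int_0^t e^{-is}f\,ds$ and $q=\bar p$ together with integration by parts, exploiting that $p(T)=q(T)=0$ on $\mathfrak{H}^1$; the aim is to show that, modulo the defining relation $|v_{12}|^2J_1+|v_{23}|^2J_2=0$ of $\mathfrak{H}^3$, the restriction of $A^4_{13}$ carries an overall factor $\bigl(|v_{12}|^2-|v_{23}|^2\bigr)$ multiplying a form that is not identically zero.

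The main obstacle is precisely this non-vanishing: one must check that the two scalar constraints cutting out $\mathfrak{H}^3$ do not force $A^4_{13}=0$, and isolate the factor $|v_{12}|^2-|v_{23}|^2$ so that the obstruction is exactly $|v_{12}|=|v_{23}|$. I would settle it by exhibiting a concrete trigonometric control on $[0,2\pi]$ with vanishing frequency-$1$ Fourier component (hence lying in $\mathfrak{H}^1$), using a few-term ansatz in $\cos2t,\sin2t,\cos3t$ with enough free parameters to impose $A^3_{23}=0$ while keeping $A^4_{13}\neq0$ whenever $|v_{12}|\neq|v_{23}|$. Finally I would assemble the conclusion: items~1--4 show that the degree-$\le7$ part $R(\delta f)=\tfrac1{2!}J_O^{(2)}+\tfrac1{4!}J_O^{(4)}+\tfrac1{6!}J_O^{(6)}$ (the odd differentials vanish) satisfies $R\neq0$ and $R(t\,\delta f)\le0$ for small $t$ along every ray --- dominated by $J_O^{(2)}$ off $\mathfrak{H}^1$, by $J_O^{(6)}$ on $\mathfrak{H}^1\setminus\mathfrak{H}^3$, and identically $0$ on $\mathfrak{H}^3$ --- so $f_0$ meets the definition of a trap of order seven, while the direction with $J_O^{(8)}>0$ both shows $f_0$ is not a local (hence not a global) maximum and rules out any higher order, pinning the order at exactly seven.
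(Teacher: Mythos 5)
Your treatment of items 1--4 and the final assembly into Definition~\ref{maindefinition} is correct and is exactly the paper's route: the harmonic identity $A^2_{13}\langle f\rangle=\tfrac12 v_{12}v_{23}\bigl(\int_0^T e^{-i\omega t}f(t)\,dt\bigr)^2$ (your $A^2_{13}=\tfrac{v_{12}}{2v_{23}}(A^1_{23})^2$) is the paper's key observation, after which formulas~(\ref{J2})--(\ref{J8}) collapse on the flag $\mathfrak{H}^0\supset\mathfrak{H}^1\supset\mathfrak{H}^3$ precisely as you state, and your ray-by-ray verification of conditions (a), (b) of the seventh-order-trap definition, plus the use of the positive eighth-order direction to rule out a local maximum, matches the paper. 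Your two-path decompositions $A^3_{23}=v_{23}(|v_{12}|^2J_1+|v_{23}|^2J_2)$ and $A^4_{13}=v_{12}v_{23}(|v_{12}|^2I_1+|v_{23}|^2I_2)$ are also the paper's forms $K_3,R_3$ and $K_4,R_4$.

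The gap is in the one genuinely hard step, which you correctly identify as the crux but then leave as a plan: producing $\delta f\in\mathfrak{H}^3$ with $A^4_{13}\langle\delta f\rangle\neq0$. The factorization you anticipate --- ``modulo the relation cutting out $\mathfrak{H}^3$, the restriction of $A^4_{13}$ carries an overall factor $|v_{12}|^2-|v_{23}|^2$'' --- is not automatic; it is exactly what must be proved, and the paper obtains it from a special property of its particular family $f=(A+B\sin 2t+C\cos 3t)\chi_{[0,2\pi]}$ in~(\ref{formspecialcontrol}): on this family the two path integrals are proportional, $R_3\langle f\rangle=-2K_3\langle f\rangle$ and $R_4\langle f\rangle=-K_4\langle f\rangle$, identically in $(A,B,C)$ [Eqs.~(\ref{R3=-2K3}), (\ref{R_4=-K_4})]. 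Because of this, the single complex equation $K_3=0$ (the real cubic system~(\ref{system}), solved explicitly by $A=1$, $B=\sqrt{19-\sqrt{73}}/\sqrt{6}$, $C=(5+\sqrt{73})/3$) places $f$ in $\mathfrak{H}^3$ for \emph{every} choice of $v_{12},v_{23}$, and simultaneously forces $A^4_{13}=v_{32}v_{21}(|v_{32}|^2-|v_{21}|^2)K_4$, reducing everything to the explicit check $K_4\neq0$ at that solution. The paper's closing remark stresses that these proportionality identities fail for general controls, so they are a feature of the chosen ansatz, not a general fact. Your proposed ansatz in $\cos 2t,\sin 2t,\cos 3t$ (no constant term) has not been checked to satisfy any such identities; without them, imposing $A^3_{23}=0$ ties the control to the particular values of $v_{12},v_{23}$, the clean $(|v_{12}|^2-|v_{23}|^2)$ factor does not appear, and you would still owe an argument that $|v_{12}|^2I_1+|v_{23}|^2I_2\neq0$ at your $v$-dependent solution. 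So the strategy coincides with the paper's, but the decisive computations --- the proportionality identities, the explicit solution of the cubic system, and the verification $K_4\neq0$ --- are missing, and your parameter-counting (``enough free parameters'') does not substitute for them.
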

\begin{proof}
If $\omega_1=\omega_2=\omega$ [Fig.~\ref{Fig1:structure}(f)], then
$$
A^{2}_{13}\langle f\rangle
=\frac 12v_{12}v_{23}\left(\int_0^Tdte^{-\omega t}f(t)\right)^2.
$$
Hence $\frac 1{4!}J_O^{(4)}(f_0)(f,\ldots, f)=\lambda_1|A^{2}_{13}\langle f\rangle|^2=0$ for all $f\in \mathfrak{H}^1$ (for shortness of notation, in this proof we use $f$ instead of $\delta f$). 
Since $A^{1}_{23}\langle f \rangle=0$ and $A^{2}_{13}\langle f \rangle=0$ for all $f\in \mathfrak{H}^1$, expression~(\ref{J6}) implies that for all $f\in \mathfrak{H}^1\setminus\mathfrak{H}^3$ 
\begin{equation*}
\frac 1{6!}J_O^{(6)}(f_0)(f,\ldots, f)=\lambda_2|A^{3}_{23}\langle f\rangle|^2<0.
\end{equation*}
Additionally $A^{3}_{23}\langle f \rangle=0$ for all $f\in \mathfrak{H}^3$, so formulas~(\ref{J6}) and~(\ref{J8}) imply that for all $f\in \mathfrak{H}^3$ one has $\frac 1{6!}J_O^{(6)}(f_0)(f,\ldots, f)=0$ and
\begin{equation*}
\frac 1{8!}J_O^{(8)}(f_0)(f,\ldots, f)=\lambda_1|A^{4}_{13}\langle f\rangle|^2\geq 0.
\end{equation*}

Let us show that the set $\mathfrak{H}^{3}$ contains a control such that the value of the Fr\'{e}chet differential of the eighth order $J_O^{(8)}(f_0)$ on this control is positive. Consider $\omega>0$. The case of negative $\omega$ is treated similarly. Assume that $T\geq T_0=2\pi/\omega$.
Without loss of generality we consider $\omega=1$ and $T_0=2\pi$. The general case can be reduced to this case by the time rescaling $t\mapsto t/\omega$.
Then
\begin{align*}
A^{3}_{23}\langle f\rangle
=|v_{12}|^2v_{23}K_3\langle f\rangle+
|v_{23}|^2v_{23}R_3\langle f\rangle,
\end{align*}
where
\begin{eqnarray}
\label{K3}
K_3\langle f\rangle=\int_0^Tdt_1\int_0^{t_1}dt_2 \int_0^{t_2}dt_3e^{i(t_1-t_2-t_3)}f(t_1)f(t_2)f(t_3),\\
\label{R3}
R_3\langle f\rangle=\int_0^Tdt_1\int_0^{t_1}dt_2\int_0^{t_2}dt_3e^{i(-t_1+t_2-t_3)}f(t_1)f(t_2)f(t_3).
\end{eqnarray}

One could search for a nonzero control from $\mathfrak{H}^3$ in the general form as $f(t)=(A_0 +\sum_{k=2}^\infty (A_k \cos{k t}+B_k\sin{k t}))\chi_{[0,2\pi]}(t)$. As we will show below, it is sufficient to consider a nonzero control $f\in \mathfrak{H}^{3}$ in the form
\begin{equation}
\label{formspecialcontrol}
f(t)=(A+B\sin{2t} + C\cos{3t})\chi_{[0,2\pi]}(t)
\end{equation}
such that $K_3\langle f\rangle=0$ and $R_3\langle f\rangle=0$. Note that any control $f$ of the form~(\ref{formspecialcontrol}) belongs to $\mathfrak{H}^{1}$ because $\int_0^Te^{-it}f(t)dt=\int_0^{2\pi}e^{-it}f(t)dt=0$.
If we substitute control~(\ref{formspecialcontrol}) into the cubic forms~(\ref{K3}) and~(\ref{R3}), we get
\begin{eqnarray}
\label{K3ABC}
    K_3\langle f\rangle&=&\frac{1}{576} \pi  [12 C \left(-12 A^2+8 i A B+5 B^2\right)\nonumber +64 (3 A +2 i B) \left(6 A^2-B^2\right)\nonumber \\&&\quad-24 C^2 (3 A+2 i B)+9 C^3]
\end{eqnarray}
and
\begin{equation}
\label{R3=-2K3}
R_3\langle f\rangle=-2K_3\langle f\rangle.
\end{equation}
To find triplets $A,B,C$ such that $K_3\langle f\rangle=0$ and $R_3\langle f\rangle=0$, we need to equate to zero the real and imaginary parts of the right hand side of~(\ref{K3ABC}). As a result, we obtain the following system of two homogeneous cubic algebraic equations for the real variables $A,B,C$:
\begin{eqnarray}
\label{system}
\begin{cases}
384A^3-64AB^2+3C^3-24AC^2+C(20B^2-48A^2)=0, \\
48 A^2 B+6 A B C-8 B^3-3 B C^2=0.
\end{cases}
\end{eqnarray}
Obviously, nontrivial solutions may exist only for $A\ne 0$. 
Thus, taking $A=1$ we obtain 9 possible solutions $(A,B,C)$ of~(\ref{system}). For certainty, we take one of them:
\begin{equation}
A=1, \quad B = \frac{\sqrt{19-\sqrt{73}}}{\sqrt{6}}, \quad C = \frac{5+\sqrt{73}}{3}.
\label{parameters}
\end{equation}
For the control $f_4$ of the form~(\ref{formspecialcontrol}) with parameters~(\ref{parameters}), one has $R_3\langle f_4\rangle=-2K_3\langle f_4\rangle=0$. Hence, $A^{3}_{23}\langle f_4\rangle=0$
for any choice of matrix coefficients $v_{12}$ and $v_{23}$ of the free Hamiltonian $V$. This implies that $f_4\in \mathfrak{H}^{3}$.

Let us compute the value $J_O^{(8)}(f_0)(f_4,\ldots,f_4)$.
Note that
\begin{equation*}
    A^{4}_{13}\langle f\rangle
=|v_{12}|^2v_{12}v_{23}K_4\langle f\rangle+
|v_{23}|^2v_{12}v_{23}R_4\langle f\rangle,
\end{equation*}
where
\begin{eqnarray}
\label{K4}
K_4\langle f\rangle=\int_0^Tdt_1\int_0^{t_1}dt_2\int_0^{t_2}dt_3\int_0^{t_3}dt_4e^{i(-t_1+t_2-t_3-t_4)}\prod_{k=1}^4f(t_k),
\end{eqnarray}
\begin{eqnarray}
\label{R4}
R_4\langle f\rangle=\int_0^Tdt_1\int_0^{t_1}dt_2\int_0^{t_2}dt_3\int_0^{t_3}dt_4e^{i(-t_1-t_2+t_3-t_4)}\prod_{k=1}^4f(t_k).
\end{eqnarray}
If we substitute the control~(\ref{formspecialcontrol}) into the fourth-order forms~(\ref{K4}) and~(\ref{R4}), we get
\begin{eqnarray*}
    K_4\langle f\rangle &=& \frac{i \pi}{9216}(690 i A^3 + 27648 A^4 + 4600 i A^3 B  - 8448 i A B^3 - 16896 A^2 B^2+\\&& +2048 B^4 +C(1920 i B^3 - 230 A^3 - 192 A B^2) -27C^4 + C^3 (432 A + \\&&\quad +288 iB) + C^2 (432 B^2 - 1296 A^2 - 4608 i A B))
\end{eqnarray*}
and
\begin{equation}
\label{R_4=-K_4}
R_4\langle f\rangle=-K_4\langle f\rangle.
\end{equation}
So we can find the values of the forms $K_4$ and $R_4$ on the control $f_4$:
\begin{eqnarray*}
K_4\langle f_4 \rangle=-R_4\langle f_4\rangle = \frac{\pi}{10368}\left[(24 \sqrt{6} \left(5 \sqrt{73}+121\right)\right. \\ \left.\times \sqrt{19-\sqrt{73}}-i \left(887 \sqrt{73}+4603\right)\right]\neq 0.
\end{eqnarray*} 
If $|v_{12}|\neq|v_{23}|$ then 
$A^4_{13}\langle f_4\rangle \neq 0$ and 
\begin{equation*}
\frac 1{8!}J_O^{(8)}(f_0)(f_4,\ldots, f_4)=\lambda_1|A^{4}_{13}\langle f_4\rangle|^2>0.
\end{equation*}
Hence along the direction determined by the control $f_4$, the objective functional in the vicinity of zero control grows at eighth order.
\end{proof}

\begin{remark}
The values of the two different forms~(\ref{K4}) and~(\ref{R4}) computed on the same control $f_4$ differ only in sign. This is due to a special type of these forms and of the control.
For a general control $f\in\mathfrak{H}^0$, equality~(\ref{R_4=-K_4}) [and also equality~(\ref{R3=-2K3})] not necessarily will be satisfied.
\end{remark}

\section{Harmonic uncontrollable system. True traps}
\label{D}

As shown in the beginning of Appendix~\ref{A}, the general case of $\lambda_1>\lambda_3>\lambda_2$ can be reduced to the case $\lambda_1>0$, $\lambda_2<0$, $\lambda_3=0$. The formulas for the general case can be obtained from the reduced case using the observable transformation provided at the beginning of Appendix~\ref{A}. Thus in this Appendix we also consider the reduced case.

First we prove Proposition~\ref{MimMaxF1}. Any matrix $U\in \mathcal R$ can be parameterized by Euler angles $(\alpha,\beta,\gamma)$ as~\cite{ElovenkovaQuantumReport2023}:
\begin{multline}
\label{ZYZ}
U=U(\alpha,\beta,\gamma)=e^{-i\alpha J_Z}e^{-i\beta J_Y}e^{-i\gamma J_Z}=\\
\begin{pmatrix}
{e^{-i(\alpha+\gamma)}\cos^2{\beta/2}}& {-\frac{e^{-i\alpha}}{\sqrt{2}} \sin{\beta}}& {e^{-i(\alpha-\gamma)}\sin^2{\beta/2}}
\\ 
{\frac{e^{-i\gamma}}{\sqrt{2}}\sin{\beta}}& {\cos{\beta}}& -{\frac {e^{i\gamma}}{\sqrt{2}}\sin{\beta}}
\\
      {e^{i(\alpha-\gamma)}\sin^2{\frac\beta 2}}& {\frac{e^{i\alpha}}{\sqrt{2}} \sin{\beta}}& {e^{i(\alpha+\gamma)}\cos^2{\beta/2}}
\end{pmatrix}.
\end{multline}
Here $(\alpha,\beta,\gamma)\in (-\pi,\pi]\times[0,\pi]\times (-\pi,\pi]$. This decomposition is non-unique if $\beta=0,\pi$. 
Recall that
$
\rho_0=\mathrm{diag}(0,0,1)
$
and 
$
O=\mathrm{diag}(\lambda_1,\lambda_2,0)$, where $\lambda_1>0$, $\lambda_2<0$.
It turns out that the value of $F_1(U(\alpha,\beta,\gamma))$ depends only on the angle $\beta$:
\begin{multline*}
F_1(U(\alpha,\beta,\gamma))=
\mathrm{Tr}(OU(\alpha,\beta,\gamma)\rho_0U^\ast(\alpha,\beta,\gamma))=\frac {\lambda_1}4(1-\cos{\beta})^2+\frac {\lambda_2}2\sin^2{\beta}=:h(\beta).
\end{multline*}

The first and second derivatives of the function $h$ are:
\begin{eqnarray*}
h'(\beta)&=&\frac 12\sin{\beta}\left(\lambda_1-(\lambda_1-2\lambda_2)\cos{\beta}\right);\\
h''(\beta)&=&\frac 12 ((\lambda_1 - 2 \lambda_2) \sin^2{\beta}-(\lambda_1 - 2 \lambda_2) \cos^2{\beta}+\lambda_1 \cos\beta ).
\end{eqnarray*}
The point $\beta=\pi$ is a point of global maximum of the function $h$. The value of this global maximum is $\lambda_1$ and this value coincides with the global maximum of the function $F_1$ on the group $\mathcal{R}$. The point $\beta=\arccos\left(\frac{\lambda_1}{\lambda_1-2\lambda_2}\right)$ is a point of global minimum of the function $h$. The value of this global minimum is $\frac{\lambda^2_2}{2\lambda_2-\lambda_1}$ and this value coincides with the global minimum of the function $F_1$ on the group $\mathcal{R}$.
Since $h'(0)=0$, $h''(0)=2\lambda_2<0$ and $h(0)=0<\lambda_1$, the point $\beta=0$ is a local but not global maximum point for the function $h$. If we substitute $\beta=0$ in~(\ref{ZYZ}) we get the matrices
$$
e^{-i\phi J_Z}=
\begin{pmatrix}
e^{-i\phi}&0&0\\
0&1&0\\
0&0&e^{i\phi}
\end{pmatrix},
$$
where $\phi\in[-\pi,\pi)$. These matrices form the group $\mathcal{R}_Z$. Thus $U\in \mathcal{R}$ is a point of local but not global (not strict) maximum of the function $F_1$ if and only if $U\in \mathcal{R}_Z$. Proposition~\ref{MimMaxF1} is proved. 

Now we prove Theorem~\ref{Thmuncontrollablecase}. From Proposition~\ref{MimMaxF1} it follows that for $T\geq T_0$ controls $f\in L_2([0,T]),\mathbb{R})$ such that
$U_T^f\in \mathcal{R}_Z$, are traps in the sense of Definition~\ref{DefTraps}.
In addition to such controls, only singular controls could be traps. It is known that for controlling of single isolated qubit, there is only one singular control, which is constant~\cite{PechenPRA2012}. Since $\mathfrak{Lie}(iH_0,iV)\cong \mathfrak{su}(2)$, the map $L_2([0,T],\mathbb{R})\ni f\mapsto U_T^f\in \mathcal{R}$ also has a single singular control, which is the constant zero control $f_0\equiv 0$.
Since $U_T^{f_0}=\mathbb I\in \mathcal{R}_Z$, this control is a true trap. Theorem~\ref{Thmuncontrollablecase} is proved.

\section*{Acknowledgements}
The authors are grateful to Anton S.~Trushechkin for discussing the role of the choice of initial controls on the optimization, to Sergei A.~Kuznetsov for help in summarizing the controllability results, and to Maria A.~Elovenkova for discussing kinematic control landscape for the uncontrollable harmonic $\Xi$ system. The research for the uncontrollable case (Section~V and Appendix D) was supported in part by the federal academic leadership program ‘‘Priority 2030’’ (MISIS Strategic Project Quantum Internet). The research in Section~III,~IV,~VI and in Appendices A, B, C was supported by the RSF grant no. 22-11-00330, \url{https://rscf.ru/en/project/22-11-00330/}, and performed at Steklov Mathematical Institute of Russian Academy of Sciences.

\end{document}